\long\def\comment#1{}
\newtheorem{example}{Example}
\newtheorem{theorem}{Theorem}
\newtheorem{definition}{Definition}
\newtheorem{proposition}{Proposition}
\newtheorem{assumption}{Assumption}
\newtheorem{corollary}{Corollary}
\title{Task Allocation in Customer-led Two-sided Markets\\ with Satellite Constellation Services}
\author{Jianglin Qiao$^1$\and Zehong Cao$^1$\and Dave de Jonge$^{2}$\and Ryszard Kowalczyk$^{1,3}$\\
\affiliations
$^1$University of South Australia\\
$^2$IIIA-CSIC\\
$^3$Systems Research Institute, Polish Academy of Sciences\\
\emails
\{Jianglin Qiao, Jimmy.Cao, Ryszard.Kowalczyk\}@unisa.edu.au\and
davedejonge@iiia.csic.es
}
\begin{document}
\maketitle

\begin{abstract}
Multi-agent systems (MAS) are increasingly applied to complex task allocation in two-sided markets, where agents such as companies and customers interact dynamically. Traditional company-led Stackelberg game models, where companies set service prices, and customers respond, struggle to accommodate diverse and personalised customer demands in emerging markets like crowdsourcing. This paper proposes a customer-led Stackelberg game model for cost-efficient task allocation, where customers initiate tasks as leaders, and companies create their strategies as followers to meet these demands. We prove the existence of Nash Equilibrium for the follower game and Stackelberg Equilibrium for the leader game while discussing their uniqueness under specific conditions, ensuring cost-efficient task allocation and improved market performance. Using the satellite constellation services market as a real-world case, experimental results show a 23\% reduction in customer payments and a 6.7-fold increase in company revenues, demonstrating the model's effectiveness in emerging markets.
\end{abstract}

\section{Introduction}

Two-sided markets are economic models where companies and customers depend on each other \cite{banerjee2017segmenting}. Companies provide services, while customers generate demand and revenue \cite{patro2020incremental}. In these markets, companies’ pricing, resource allocation, and service quality influence customer decisions, while customer preferences and willingness to pay shape company strategies. Effective task allocation is key to balancing supply and demand, ensuring resources meet customer needs while maximising company profits, all within the constraints of resources, pricing, and customer diversity.

Two-sided market model design typically draws on game-based and non-game-based approaches. Non-game-based approaches aim to optimise market mechanisms by focusing on pricing strategies and overall market efficiency. These methods enhance performance by refining market operations without explicitly modelling the interactions between companies and customers \cite{lee2023distributed}. In contrast, game theory-based approaches provide a rigorous framework for analysing decision-making in multi-agent environments, capturing the competitive and cooperative dynamics between companies and customers \cite{bataineh2021cloud}. Among the game-based approaches, the Stackelberg Game model is a prominent tool for two-sided markets \cite{xu2016signaling}. 

\begin{figure}[ht]
    \centering
    \includegraphics[scale=0.18]{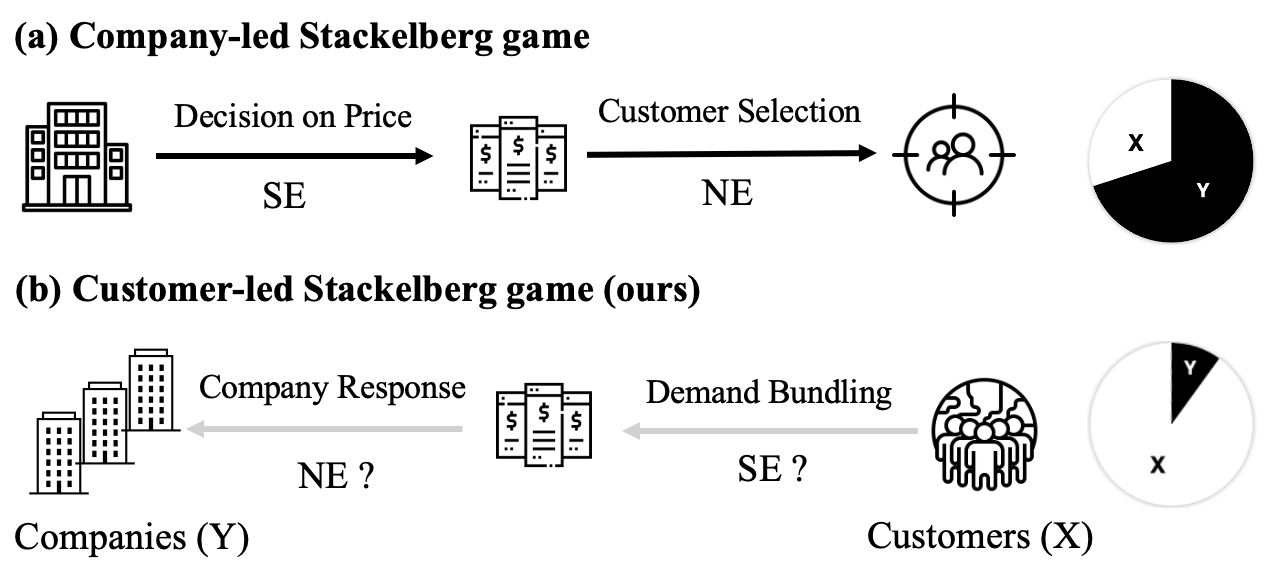}
    \caption{(a) Company-led vs (b) Customer-led Stackelberg game approach for Two-sided Market}\label{different}
\end{figure}

In the company-led Stackelberg model, platforms or companies assume the role of leaders, setting prices or service strategies, while customers act as followers, responding to these decisions. This structure facilitates the identification of optimal strategy, where both companies and customers benefit by reaching a Stackelberg equilibrium (SE) in the leader game and a Nash Equilibrium in the follower game, improving market stability and efficiency. \textbf{Fig.~\ref{different}(a)} illustrates this classical structure, where companies determine prices and customers select providers, resulting in SE and NE \cite{li2018integrating,wang2022coordinating}. However, with the increasing complexity of market demands and the growing number of customers, the limitations of company-led models have become apparent. This shift has given rise to customer-led models in two-sided markets, as shown in \textbf{Fig.~\ref{different}(b)}. In a customer-led model, customers initiate tasks, and companies, now acting as followers, adapt their strategies in response to customers' diverse and personalised demands. This inversion of roles introduces a critical research gap in achieving SE and NE, as existing models do not fully account for the dynamic adjustments companies require to meet evolving customer preferences. The complexity arising from this role reversal makes the existence and stability of such equilibrium less certain, highlighting the need for new theoretical frameworks and adaptive solution approaches to address this gap.

In many customer-led, two-sided markets, such as cloud computing services \cite{du2022sdn}, ride-sharing platforms \cite{mo2023stochastic}, and crowdsourcing platforms \cite{kang2023incentive}, current multi-agent systems (MAS) face challenges in providing effective task allocation solutions. The complexity of these markets lies in high operational costs, resource constraints, and the growing demand for personalised services, emphasising the need for adaptive task allocation strategies. The satellite constellation services market serves as a representative example of a customer-led, two-sided market. As of the end of 2024, there are 1,800 Earth Observation satellites operated by over 250 satellite constellation service providers, 70\% of which are small-scale providers owning only 1-2 satellites and serving tens of thousands of users \cite{NovaSpace2025}. Insights from our recent research and industry engagements in the US, EU, and Australia indicate that this market, valued at \$7.6 billion \cite{PayloadSpace2025}, provides a realistic experimental setting for studying complex task allocation problems. These satellite technologies play a critical role in various applications, including agriculture, urban planning, disaster monitoring, and environmental protection \cite{cochran2020earth,elliott2020earth}. This market's unique characteristics and demands provide a valuable opportunity to address the challenges of adaptive task allocation in customer-led, two-sided markets, driving innovation in both theory and practice.

This paper proposes a \textit{customer-led Stackelberg game model} for cost-efficient task allocation in two-sided markets. We introduce a group-buying strategy for customers as leaders with similar needs, allowing them to collaborate through social networks to pool their tasks to reduce costs for individual customers. On the company side, satellite providers can form teams to participate as followers, expanding their capacity to meet increased demand once customer payments are confirmed. we provide theoretical guarantees, demonstrating the existence of an SE for the leader game and an NE for the follower game. Also, we present the uniqueness of NE and SE when utility functions are satisfied with special conditions. To validate the effectiveness of our model, we provide a real-world market to fit the proposed model based on satellite constellation services, which demonstrated that the customer-led Stackelberg game model for task allocation could significantly reduce customer $23\%$ payments, increase company $14$ times revenue, and achieve $100\%$ completed the tasks.

\section{Related Work}

Two-sided markets have been extensively studied, with significant attention given to the interactions between companies and customers and how these interactions shape market dynamics \cite{rysman2009economics}. These studies provide a foundational understanding of platform operations by balancing demand and supply between two interdependent groups, focusing on pricing strategies and the role of externalities \cite{rochet2006two}. In such markets, optimising interactions and resource allocation between the two sides is crucial to maintaining efficiency and maximising overall value.

In this context, game theory offers a robust framework for addressing task allocation problems, particularly in MAS, where the dynamics between agents, such as companies and customers, closely resemble the interactions in two-sided markets. Game theory provides analytical and design tools to account for both individual and collective interests, manage competition \cite{xu2016signaling} and cooperation \cite{chalkiadakis2022computational}, and design incentive mechanisms \cite{zhao2022mechanism,ge2024incentives}. These capabilities enable the identification of optimal task allocation strategies in complex multi-agent environments \cite{picard2021autonomous}, which are particularly relevant in customer-led two-sided markets like the satellite constellation services market.

Among game-theoretic approaches, the Stackelberg game model stands out as a widely applied tool, particularly in scenarios where there is a hierarchical decision-making structure. Traditionally, the Stackelberg game has been applied in company-led markets, where companies act as leaders and customers as followers. For instance, in wireless network markets \cite{rose2014pricing}, companies set fixed prices, and customers make decisions accordingly. This company-led model has been extended to dynamic pricing in industries such as mobile edge computing \cite{CHEN2020273}, where companies allocate cloud resources, and the energy market \cite{zhang2022research}, where operators lead pricing decisions and energy systems follow. The versatility of the Stackelberg game model is further demonstrated in big data marketplaces \cite{li2023optimal} and online markets \cite{ma2021reputation}, highlighting its adaptability to various market structures.

The existing company-led Stackelberg game models inadequately address emerging markets driven by customer demand, particularly when customers increase, personalised service needs become increasingly critical, and customer behaviour becomes more complex and diverse. In such markets, the demand for customer-led Stackelberg game models is growing. In these models, customers act as leaders by proposing tasks, and companies adjust their strategies accordingly. However, task and resource allocation complexity in customer-led environments complicates the modelling process, especially in proving the existence of NE and SE. Furthermore, analysing the mutual influence of customer and company decision-making behaviours becomes increasingly challenging.

\section{Customer-led Two-sided Market}
This section describes the mathematical notation and formulation for a set of customers $X=\{x_1, x_2, \dots \}$ and a set of companies $Y=\{y_1,y_2,\dots\}$, and $|X|\gg |Y|$. A market can facilitate the trading of services, denoted as $S=\{s_1, s_2, \dots, s_l\}$. Each customer needs a set of services, and each service is associated with an independent price they are willing to pay based on their preference. Each company offers a set of services to meet these requirements. Customers can collaborate with the group-buying strategy by proposing tasks based on similar needs to get lower prices through market-pre-defined discount factors. The proposed tasks will be fulfilled by task allocation through teams formed by companies. We first introduce the formal representation of the similarity of services in section 3.1. Then, we formally define the customers and task formation in section 3.2. Lastly, we indicate the representation of team formation for companies and task allocation in section 3.3.  

\subsection{Commonalities in Customers/Companies}
First, customers need to generate tasks with other users who have high similarity in their needs. Conversely, companies need to team up with those with low similarity to increase revenue potentially by extending the team service. An undirected or weighted graph \cite{jiang2022batch} is a common way to represent social networks. Still, it cannot represent the similarity between agents in our model because two vertices in the graph can be different.

We modify social networks into a weighted directed graph (asymmetric edges) to describe the similarity between two agents in our market. A social network $SN=(V, E)$ is a weighted directed graph where $V$ is the set of agents, and $E$ is a set of edges where each edge $e=(v, v')$ represents the social connections between a pair of agents. Let $S_v\subseteq S$ represent the services of agent $v$ needs. For any $(v, v')\in E$, the weight $w_{(v, v')}$ of the edge (or called social distance) can be calculated as follows:
\begin{center}$w_{(v, v')} =\frac{|S_v\bigcap S_{v'}|}{|S_{v'}|}\in [0,1]$ \end{center}
The weight of each edge of the social network represents the match of services between agents. Suppose two agents $v_1$ and $v_2$, and $S_{v_1}=\{s_1,s_2,s_3\}$ and $S_{v_2}=\{s_1\}$. From the $v_1$ point of view, the similarity between $v_1$ and $v_2$ is $\frac{1}{3}$, but the similarity from the $v_2$ point of view is $1$. Any company that can meet the requirement of $v_1$ must meet the requirement of $v_2$, but a company that can meet the requirement of $v_2$ can not guarantee it can meet the requirement of $v_1$. 

\subsection{Customers for Collaborative Tasks}
Each customer $x\in X$ is defined as a tuple $(S_x, p_x)$, where $S_x\subseteq S$ is a set of services that customer $x$ needs, and $p_x: S_x \to \mathbb R$ indicate customer $x$'s payment for each service $s\in S$, where each $p_x(s)$ is a cost that uses service $s$ as input and output the payment of service $s$ that customer $x$ willing to pay after receiving service $s$. Note that the price $p_x(s)$ is determined by customer $x$ based on preference. A customer social network is a weighted directed graph containing all the customers participating in the market as nodes, where weighted edges represent the similarity between customers' needs. We define the customer social network as $CSN=(X, EX)$, where $X$ is a set of customers as nodes and $EX$ is a set of directed edges. $w(x, x')$ is the weight of the directed edge $(x,x')\in EX$ represents the similarity of customer needs between $x$ and $x'$. 

Customers can independently or collaborate with others to generate tasks based on their needs. A task $k\in K$ is defined by a tuple $(X_k, R_k, (d^k_s)_{s\in R_k}, (Pay^k_s)_{s\in R_k})$, where $X_k\subseteq X$ is a set of customers involved in the task $k$, called task members; $R_k=\bigcup_{x\in X_k} S_x$ is a set of services that required to finish task $k$; $d^k_s=\sum_{x\in X_k,s.t. s\in S_x} 1$ indicate that how many customers in the task need service $s$; and $Pay^k_s$ is the payment of each service in task $k$. We apply the group buying strategy to customers whose customers have similar needs and can jointly propose tasks. To encourage customers to generate tasks jointly, we introduce a discount factor $\delta(\cdot)\in (0,1]$, which may vary based on task characteristics or customer types and is pre-determined by the market. This is modelled as a strictly monotonically decreasing function, capturing how payments reduce if more customers $(d^k_s>1)$ need a service in a task. For instance, if only one customer requires a service, $\delta(1)=1$, but $\delta(d^k_s)$ decreases when more customers need that service, with the exact reduction varying depending on the specific task or customer profile. Therefore, the total payment for service $s\in R_k$ task $k$ can be calculated as:
\begin{center}$Pay^k_s=\sum_{x\in X_k,s.t. s\in S_x}\delta(d^k_s)\cdot p_x(s)$\end{center}
Let $K=\{k_1, k_2, \dots, k_l\}$ denote a set of tasks initiated by customers $X$. A set of tasks $K$ is feasible if and only if $\bigcup_{k\in K} X_k=X$ (all customers must be in one of the tasks)  and $X_k\cap X_{k'}=\emptyset$ (one customer can only join one task) for any $k,k'\in K$. Let $\mathcal K$ denote all possible feasible sets of tasks for customers $X$.

\subsection{Companies for Team Formation}
Like customers, companies understand their capacity to provide services, the cost for each service, and the operating costs depending on the company's number of services it can provide. Each company $y\in Y$ is defined by a tuple $(S_y, c_y, o_y)$, where $S_y\subseteq S$ is a set of services that the company provides; $c_y: S_y\to \mathbb R$ indicate the company's cost function, and each $c_y(s)$ denote the cost of the service that the company is making to provide the service; $o_y: S_y\to \mathbb R$ indicate the company's offer function, and each $o_y(s)$ denote the offer of the service that company is making to provide the service and $o_y(s)\geq c_y(s)$ for all $s\in S$. Note that, we assume that $p_x(s)\geq o_y(s)$ for any $x\in X$ and $y\in Y$. Additionally, We assume that collectively, companies can cover all services, but no single company can only offer all services due to large types of services, i.e. $|S_y|<<|S|$, $\bigcup_{y\in Y}S_y=S$ and $\not\exists y\in Y$, such that $S_y=S$. Like the customer social network, a company social network $PSN=(Y, EY)$ is established, where $Y$ represents the set of companies as nodes and $EY$ comprises a set of directed edges. The weight $w(y, y')$ of the directed edge $(y, y')\in EY$ represents the similarity of offered services between companies $y$ and $y'$.

We allow companies to form teams to expand the scope of their services freely. A team $m$ is defined as a tuple $(Y_m, R_m, (q^m_s)_{s_\in R_m})$, where $Y_m$ is a set of companies involved in the team with each company $y\in Y_m$ called a partner; $R_m=\bigcup_{y\in Y_m} S_y$ is a set of services that team $m$ can provide; $q^m_s=\sum_{y\in Y_m,s.t. s\in S_y} 1$ represents the number of partners in the team $m$ that provide service $s$ offered from team $m$. For feasible tasks $K$, let $M_K=\{m_1,m_2,\dots\}$ be a team formation corresponding to $K$. Similar to tasks, a team formation $M_K$ is feasible if and only if $\bigcup_{m\in M_K} Y_m=Y$ and $Y_m\cap Y_{m'}=\emptyset$ for any $m,m'\in M_K$. Let $\mathcal M_K$ denote all possible feasible team formations of the company social network $PSN$. In the rest of this paper, we refer to the tuple $(X, CSN, \mathcal K, Y, PSN, (\mathcal M_K)_{K\in \mathcal K})$ as an instance of the market. 

\begin{example}
\begin{figure}[ht]
    \centering
    \includegraphics[scale=0.1]{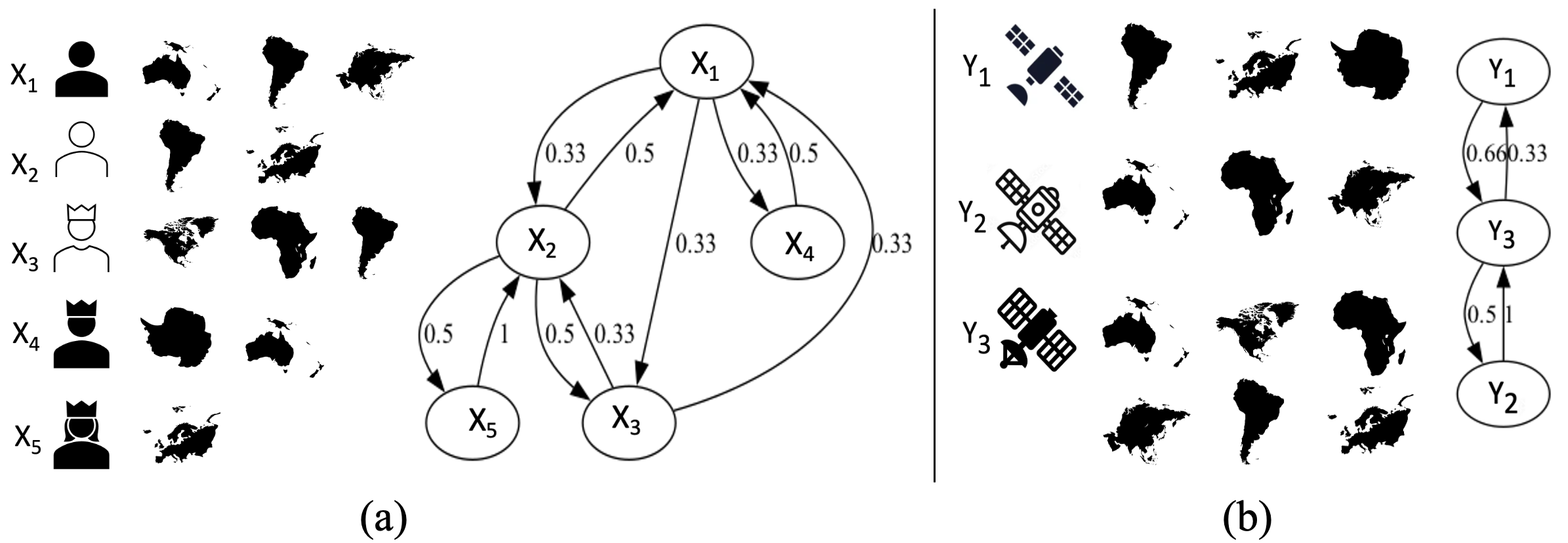}
    \caption{An example of customers and companies: (a) Five customers along with their corresponding customer social network; (b) Three companies along with their corresponding company social network.}\label{PSN_CSN}
\end{figure}

\textbf{Fig.~\ref{PSN_CSN}(a)} shows five customers' requirements and their CSN. Customer $x_1$ needs EO services in Oceania, Africa, and Asia; $x_5$ needs services only in Europe. Weighted directed edges represent the similarity of service needs among customers---for example, $w(x_2, x_5) = 0.5$ indicates 50\% similarity, while $w(x_5, x_2) = 1$ means a complete overlap in needs. \textbf{Fig.~\ref{PSN_CSN}(b)} illustrates the company's PSN and each company's service coverage. Company $y_1$ provides services in Africa, Europe, and Antarctica. Companies can partner to extend service coverage and attract more customer tasks---for instance, $y_1$ partnering with $y_2$ to offer a more comprehensive service package.
\begin{figure}[ht]
    \centering
    \includegraphics[scale=0.11]{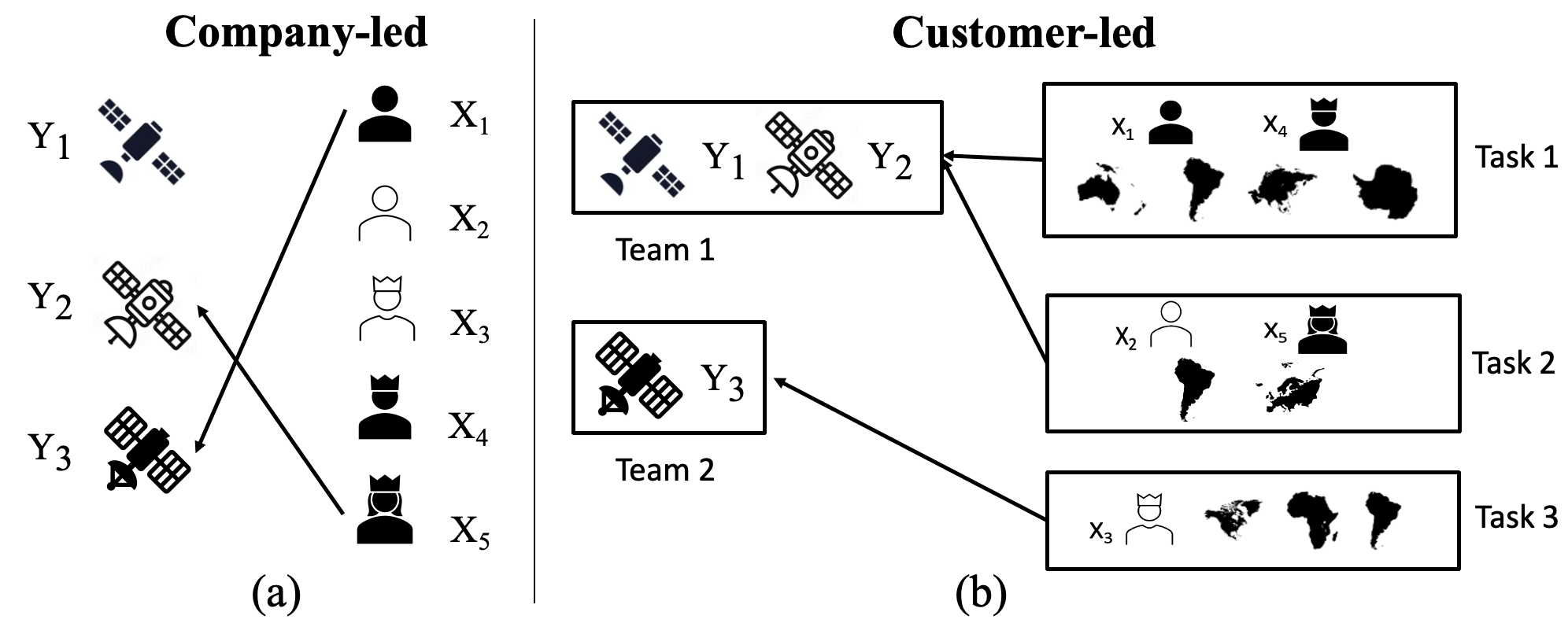}
    \caption{Two examples of task allocation: (a) ``Company-led" indicates Company-led Stackelberg game model companies make price first and customers follow the price; (b) ``Customer-led" refers to our Stackelberg game model facilitating task generation among customers and team formation among companies.}\label{TA_example}
\end{figure}
In the Company-led scenario (\textbf{Fig.~\ref{TA_example}(a)}), customer $x_4$ can't complete their task due to no single company providing all required services, and company $y_1$ secures no tasks because of limited coverage. The Customer-led scenario (\textbf{Fig.~\ref{TA_example}(b)}), where Task Generation and Team Formation are implemented, shows improved outcomes. Customers $x_1$ and $x_4$ (needing services in Oceania) and $x_2$ and $x_5$ (needing services in Europe) benefit from companies $y_1$ and $y_2$ collaborating. 
\end{example}

\section{Customer-led Stackelberg Game Model}
In our market, customers are focused on minimising their costs, whether they undertake a task independently or collaborate with others. When the set of tasks is predetermined, companies seek to maximise their revenue by forming teams or working individually (where a team consists solely of themselves). We model customer and company interactions using the Stackelberg game \cite{stackelberg1952theory}. The leaders (customers) generate a set of feasible tasks for the market, and the followers (companies) form teams to optimise their revenue based on the tasks established by the leaders.

\subsection{Follower Game for Companies}\label{F_G}
The follower game involves companies, denoted as $Y$, acting as players. The set of tasks, $K$, are treated as fixed parameters. The strategy for each company $y$ includes the option to either join a team or work independently. Formally, let $\mathcal A_y$ denote a set of strategies (strategy space) for each company $y$, and each strategy $A_y\subseteq Y$ and $y\in A_y$, which is a subset of companies $Y$ that company $y$ aims to team with. The strategy profile of the follower game $spf=(A_{y_1},\dots, A_{y_{|Y|}})\in SPF$ includes exactly one strategy from the company's strategy space of the follower game, where $SPF=\mathcal A_{y_1}\times \dots \times \mathcal A_{y_{|Y|}}$ denote as all possible strategy profiles. Note that companies' strategies are made simultaneously. Once the strategy profile $spf$ is known, the team formation $M^{spf}_K$ will be determined. The formal definition is as follows.
\begin{definition}
Given a set of tasks $K$ and a strategy profile $spf \in SPF$, the associated team formation of $spf$ is $M^{spf}_K=\{m|\forall y,y'\in Y_m, A_y=A_{y'}\}$.
\end{definition}
To understand the definition above, it can be summarised that all companies have the same strategy on the same team. A simple example for two companies $Y=\{y, y'\}$: if $A_y=\{y\}$ and $A_{y'}=\{y, y'\}$, then the team formation contains two teams $m=\{m_1,m_2\}$, where $Y_{m_1}=\{y\}$ and $Y_{m_2}=\{y'\}$; if $A_y=A_{y'}=\{y,y'\}$ , then the team formation only contains one team $m'=\{m_1\}$, where $Y_{m_1}=\{y,y'\}$, respectively. Note that different strategy profiles can produce the same team formation, such as $spf_1=\{\{y\},\{y'\}\}$ and $spf_2=\{\{y\},\{y, y'\}\}$ have the same team formation $m=\{m_1,m_2\}$, where $Y_{m_1}=\{y\}$ and $Y_{m_2}=\{y'\}$, for two companies $Y=\{y, y'\}$. Let $\mathcal M_K=\{M^{spf}_K:\forall spf\in SPF\}$ represent all possible team formations in the follower game by giving the tasks $K$. We provide more detailed examples of a strategy profile with associated team formation in \textbf{Appendix A.1}.

Next, we introduce the calculation of the service offer price for a team $m$ for task allocation. We assume that the offer price of a service $s \in R_m$ is an independently and identically distributed (i.i.d.) random variable $P^m_s \in [\min_{y \in Y_m} o_y(s), \max_{y \in Y_m} o_y(s)]$. If a company $y \in Y_m$ does not provide service $s$ (i.e., $s \notin R_y$), we define $o_y(s) = \infty$ to exclude it from the competition for that service. Therefore, the offer price for a task $k$ is the sum of multiple i.i.d. random variables:
\begin{center}$P^m_k=\sum_{s\in R_k} P^m_s$\end{center}
It is important to note that the sum of multiple i.i.d. random variables is not an i.i.d. random variable. For example, if two variables are uniformly distributed over $[10, 20]$, their sum follows a triangular distribution over $[10, 40]$. Nevertheless, according to the central limit theorem, when a task involves a large number of services, the sum $P^m_k$ can be approximated by a normal distribution. Using extreme value theory \cite{haan2006extreme}, we can calculate the probability that team \(m\) wins the task, given tasks $K$ and team formation $M_K$, as:
\begin{center}$Pr(m,k)=Pr(P^m_k=\min\{P^{m'}_k: R_k\subseteq R_m\})$\end{center}
This probability represents the likelihood that the offer price from team $m$ is the minimum among all teams with the necessary resources to perform task $k$. The expected profit of team $m$, given a strategy profile $spf$ and its associated team formation $M^{spf}_K$, is:
\begin{center}$E_m[spf]=\sum\limits_{k\in K,s.t. R_k\subseteq R_m} (Pr(m,k)\cdot \sum\limits_{s\in R_k} Pay^k_s)$\end{center}
Here, $Pay_s^k$ represents the payment that team $m$ receives for providing service $s$ as part of task $k$. The total expected profit is the sum of the expected payments for all tasks $k$ that team $m$ has the resources to perform. Finally, the company's revenue, given a strategy profile $spf$, is:
\begin{center}$r_y(spf)=\lambda\cdot\frac{E_m[spf]}{|Y_m|}-\sum\limits_{s\in S_y}c_y(s)$\end{center}
where $\lambda \in (0,1]$ is a parameter that represents how much profit needs to be converted to the cost of team formation. For example, if $\lambda=0.9$, $10\%$ of profit from team formation needs to be reduced as the cost of team formation. Each company's revenue in the team is equal to the team's expected revenue divided by the number of team members minus its own cost. For the follower game, each company aims to maximise its revenue. We can now define the NE for the follower game:

\begin{definition}Given a market instance $(X, CSN, \mathcal{K}, Y, \\PSN, (\mathcal{M}_K)_{K\in \mathcal{K}})$ and a set of feasible tasks $K$, a strategy profile $spf^* = \{A^*_y: \forall y \in Y\}$ is at NE if and only if, for every company $y \in Y$, we have:
\begin{center}$r^K_y(spf^*) \geq r^K_y(spf)$\end{center}
for any alternative strategy profile $spf = \{A_y, A^*_{-y}\} \in SPF$, where $A_y$ represents a new strategy for company $y$ and $A^*_{-y}$ represents the equilibrium strategies of all other companies.
\end{definition}

\subsection{Leader Game for Customers}\label{leader}
The players in the leader game are customers $X$. Each customer, $x \in X$, can generate a task individually or collaborate on a new task with other customers. Formally, let $\mathcal A_x$ denote a set of strategies (strategy space) for each customer $x$, and each strategy $A_x\subseteq X$ and $x\in A_x$, which is a subset of customers $X$ that customer $x$ aims to team with. The strategy profile of the leader game $spl=(A_{x_1},\dots, A_{x_{|X|}})\in SPL$ includes exactly one strategy from the customer’s strategy space of the leader game, where $SPL=\mathcal A_{x_1}\times \dots \times \mathcal A_{x_{|X|}}$ denote as all possible strategy profiles. Note that customers' strategies are made simultaneously. Once the strategy profile $spl$ is known, the task $K^{spl}$ will be identified. The formal definition is as follows.

\begin{definition}
Given a strategy profile of the leader game $spl \in SPL$, the associated tasks are $K^{spl}=\{k|\forall x,x'\in X_k, A_x=A_{x'}\}$.
\end{definition}

To understand the definition above, we can use a similar example in Definition 2 below by replacing company $y$ with customer $x$ to explain the condition of the tasks. We also provide more complex examples in \textbf{Appendix A.2}. For each feasible tasks $K^{spl}$ associated with leader game strategy profile $spl$, the total payment for a customer $x \in X_k$ to complete service if they join task $k\in K^{spl}$ is given by the following equation:
\begin{center}$f^k_x=\sum_{s\in S_x} \delta(d^k_s)\cdot p_x(s)$\end{center} 
where $\delta(d^k_s)$ represents the discount factor and $p_x(s)$ is the initial price. Then, the utility of customer $x\in X$ by given strategy profile of follower and leader game $(spl,spf)$ is 
\begin{equation}\label{utility_leader}
u_x(spl,spf)= \begin{cases}
f^k_x,&x\in X_k \quad\&\quad TA_{M^{spf}_K}(k)\neq\emptyset \\
\xi, & x\in X_k \quad\&\quad TA_{M^{spf}_K}(k)=\emptyset
\end{cases}
\end{equation}
where $f^k_x$ is the utility that denotes the payment for the complete task, and  $\xi>>f^k_x$ represents the utility that the task can not be completed. The Stackelberg game model assumes that the leader acts first, and the followers observe the leader's action before responding. In the proposed Stackelberg game model for the market in Section 3, we aim to identify the optimal strategies for customers and companies, referred to as SE, which is a concept used to describe the interaction between a leader and followers in decision-making processes within a market. The formal definition is as follows. 

\begin{definition} For an instance of a market $(X, CSN, \mathcal K, Y,\\ PSN, (\mathcal M_K)_{K\in \mathcal K})$, a strategy profile of Stackelberg game $(spl^*,spf^*)$ is at SE if and only if 
\begin{itemize}
    \item For any customer $x\in X$, we have $u_x(spl^*,spf^*)\leq u_x(spl, \\spf^*)$ for any $spl=\{A_x,A^*_{-x}\}\in SPL$;
    \item For any company $y\in Y$, we have $r^{K^{spl}}_y(spf^*)\geq r^{K^{spl}}_y(spf)$ for any $spf=\{A_y,A^*_{-y}\}\in SPF$.
\end{itemize}
\end{definition}

In the Stackelberg game model, after formalising task allocation, no customer can achieve a lower payment by changing decisions, and no company can obtain a higher revenue by switching strategies. We have modified task allocation to the tuple $(spl, spf, K^{spl}, M^{spf}_{K^{spl}}, TA_{M^{spf}_{K^{spl}}})$ for the Stackelberg game model of the market. In the rest of this paper, the strategy profile $(spl, spf)$ will be used to represent task allocation.

\subsection{Theoretical Analysis} 

We show the NE in the follower's game and SE in the Stackelberg game as the main theoretical contributions in this paper.

\begin{tcolorbox}[colframe=black, colback=white, boxsep=0pt, left=2pt, right=2pt, top=2pt, bottom=2pt]
\begin{theorem}\label{NE} 
At least one pure strategy Nash equilibrium $spf^*$ exists in the follower game for any given feasible tasks $K$.
\end{theorem}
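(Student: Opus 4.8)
The plan is to establish existence of a pure strategy Nash equilibrium by showing that the follower game is a \emph{potential game} (in the sense of Monderer and Shapley), since every finite potential game is guaranteed to possess at least one pure strategy Nash equilibrium. The key observation driving this approach is that the payoff structure is highly congestion-like: a company's revenue $r_y(spf)=\lambda\cdot E_m[spf]/|Y_m|-\sum_{s\in S_y}c_y(s)$ depends on the strategy profile only through the team $m$ that $y$ belongs to, and the cost term $\sum_{s\in S_y}c_y(s)$ is a constant independent of the strategy chosen. This suggests the game decomposes into a structure where deviations can be tracked by a global function.

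First I would verify that the strategy space is finite: each $\mathcal A_y$ is a subset of the finite power set of $Y$, so $SPF$ is finite, and the machinery of finite potential games applies. Second, I would attempt to construct an exact or ordinal potential function $\Phi:SPF\to\mathbb R$ with the defining property that for every company $y$ and every unilateral deviation from $A_y$ to $A_y'$, the sign of the change in $r_y$ matches the sign of the change in $\Phi$. A natural candidate is to sum the per-team expected profits across all teams, for instance $\Phi(spf)=\sum_{m\in M^{spf}_K} E_m[spf]$ or a weighted variant, since the team-formation map $M^{spf}_K$ partitions the companies and a single company's deviation alters at most the two teams it leaves and joins. Third, I would check the potential condition against the revenue function, paying attention to how $E_m[spf]$ and the division by $|Y_m|$ respond when one company switches teams, and how the winning probabilities $Pr(m,k)$ shift as the set of competing teams changes.

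The hard part will be the third step: the coupling between teams through the competition probabilities $Pr(m,k)=Pr(P^m_k=\min\{P^{m'}_k:R_k\subseteq R_{m'}\})$ is global rather than local, because a deviation that changes one team's resource set $R_m$ can alter which teams are eligible for a task and therefore shift the winning probabilities of \emph{other} teams that did not change. This externality is precisely what can break the exact-potential property, so the delicate work is to show either that the revenue depends on the profile in a way that still admits an ordinal potential, or that the dependence factors through a quantity that my candidate $\Phi$ tracks correctly. If a clean potential cannot be exhibited, the fallback plan is to argue existence via a best-response / better-response improvement path: I would show that the finite better-response dynamics admits no cycles (equivalently, that the game is a generalized ordinal potential game), which again yields a pure Nash equilibrium as a terminal node of any maximal improvement path. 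I would expect the monotone structure of the discount factor $\delta$ and the constancy of the cost term to be the levers that rule out improvement cycles, and I would treat the careful accounting of the two affected teams under a single deviation as the main technical obstacle to close.
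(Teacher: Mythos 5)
Your plan follows the same route as the paper's own proof: cast the follower game as an ordinal potential game with the candidate potential $\Phi(spf)=\sum_{m\in M^{spf}_K}E_m[spf]$, then invoke the finite improvement property to obtain a pure equilibrium. However, as written your proposal is a plan rather than a proof: the decisive step --- verifying that every profitable unilateral deviation raises $\Phi$ --- is exactly the step you defer, and you yourself concede it may fail. That concern is well founded, so the gap is genuine. Note that since the winning probabilities $Pr(m,k)$ sum to one over the teams eligible for a serviceable task $k$, your candidate collapses to $\Phi(spf)=\sum_{k\,\text{serviceable under}\,spf}\sum_{s\in R_k}Pay^k_s$, i.e.\ the total payment of tasks that at least one team can cover. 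A deviation can then increase the deviator's per-capita share (e.g.\ by joining a smaller or richer team) while destroying coverage of a task that only its old team could perform, giving $\Delta r_y>0$ together with $\Delta\Phi<0$. So the sign-matching condition does not hold for this $\Phi$ without additional structural assumptions, and your fallback (showing better-response dynamics admits no cycles) is asserted as a hope, not established; the levers you name ($\delta$ monotonicity, constancy of costs) do not obviously rule out such cycles.

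For comparison, the paper's proof takes the identical approach and stumbles at the identical point: it decomposes $\Delta\Phi$ into contributions of only the two teams the deviator leaves and joins, claiming other teams are unaffected --- which ignores precisely the competition externality through $Pr(m,k)$ that you flagged --- and bridges the remaining logical distance with the phrase ``generally indicates'' rather than an argument. So your instinct about where the difficulty lies is sharper than the paper's treatment of it; but neither your proposal nor the paper's proof actually closes that step, and closing it (or replacing the potential-game argument with one that handles the eligibility externality) is the substance of the theorem.
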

\end{tcolorbox}
\textbf{See detailed proof of Theorem \ref{NE} in Appendix B.1}. We also provide a corollary to show the uniqueness of NE in the follower game for special conditions.

\begin{tcolorbox}[colframe=black, colback=white, boxsep=0pt, left=2pt, right=2pt, top=2pt, bottom=2pt]
\begin{theorem}\label{SE}
At least one Stackelberg Equilibrium for our Customer-led Stackelberg Game Model.
\end{theorem}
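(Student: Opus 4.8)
The plan is to establish Theorem~\ref{SE} by backward induction over the two stages, viewing a Stackelberg equilibrium as a pure-strategy equilibrium of the combined game on the player set $X\cup Y$ and exploiting the finiteness of both strategy spaces together with Theorem~\ref{NE}. First I would note that $SPL$ and $SPF$ are finite, since $X$ and $Y$ are finite and every admissible $A_x$ (resp.\ $A_y$) is a subset of a finite set; hence the induced task sets $K^{spl}$ and team formations $M^{spf}_K$ range over finite collections. This reduces the question to producing a single pair $(spl^*,spf^*)$ that is stable under the two unilateral-deviation tests in Definition~4.

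For the follower stage I would invoke Theorem~\ref{NE} directly: for every leader profile $spl\in SPL$ the induced follower game with fixed tasks $K^{spl}$ admits a pure Nash equilibrium, so I may fix a selection $spf^{\dagger}(spl)$ from this nonempty finite set. This settles the second SE condition by construction, because the company test $r^{K^{spl^*}}_y(spf^*)\geq r^{K^{spl^*}}_y(spf)$ is exactly the follower-NE condition for the tasks $K^{spl^*}$ once we set $spf^*:=spf^{\dagger}(spl^*)$.

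For the leader stage I would study the reduced game in which each customer $x$ minimises $u_x(spl,spf^{\dagger}(spl))$ over $\mathcal A_x$. The structural lever is that $\delta(\cdot)$ is strictly decreasing, so sharing a service always moves every member's payment $f^k_x=\sum_{s\in S_x}\delta(d^k_s)\,p_x(s)$ monotonically in the shared demand $d^k_s$; this cost-sharing structure suggests an ordinal-potential / finite-improvement argument over the finite leader action space, terminating at a pure leader profile $spl^*$ from which no customer can lower its payment. The pair $(spl^*,spf^*)$ is then the candidate equilibrium.

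The step I expect to be the main obstacle is certifying the first condition of Definition~4, which freezes $spf^*$ under customer deviations, whereas backward induction re-optimises the followers to $spf^{\dagger}(spl)$. I must therefore show the improvement argument survives with the company strategies held fixed, so that no customer gains against the frozen $spf^*$. The genuine difficulty is the penalty $\xi$: since $u_x$ jumps to $\xi$ precisely when $TA_{M^{spf}_K}(k)=\emptyset$, the reduced leader payoff is discontinuous and need not be monotone in $d^k_s$, which can break the potential argument. Closing the gap requires showing that at the candidate equilibrium no customer can profitably move into a task left unallocated by the fixed team formation $M^{spf^*}_{K^{spl^*}}$---that is, $\xi$-valued deviations are never improving---so that the monotone group-buying structure governs all relevant deviations and $(spl^*,spf^*)$ is a genuine SE.
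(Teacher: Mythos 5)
Your skeleton matches the paper's up to the follower stage: like the paper, you fix for every $spl \in SPL$ a pure follower equilibrium $spf^{\dagger}(spl)$ guaranteed by Theorem~\ref{NE}, set $spf^* = spf^{\dagger}(spl^*)$, and thereby discharge the company-side condition of Definition~4. The genuine gap is your leader stage. You assert that strict monotonicity of $\delta(\cdot)$ ``suggests'' an ordinal-potential / finite-improvement argument for the reduced customer game, but you never construct a potential, and there is little reason one exists here: the per-service costs $\delta(d^k_s)\,p_x(s)$ are player-specific (scaled by $p_x(s)$), tasks form by the unanimity rule $K^{spl}=\{k \mid \forall x,x' \in X_k,\ A_x = A_{x'}\}$ rather than by independent resource choice, and the payoff jumps to $\xi$ whenever the induced task goes unallocated---each of these is a standard obstruction to the finite improvement property. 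You candidly flag the $\xi$ discontinuity yourself, but flagging it is not closing it; as written, the existence of a pure leader profile $spl^*$ stable under unilateral deviation is exactly what remains unproven. The paper routes around this differently: it observes that the reduced leader game with costs $\tilde{u}_x(spl) = u_x(spl, \mathrm{BR}(spl))$ is a finite normal-form game, invokes the classical Nash existence theorem to obtain a possibly mixed equilibrium, and then purifies it by arguing that since $\xi \gg \max_k f^k_x$, any mixture placing weight on task-failing actions is strictly dominated. To complete your proposal you would need either to actually exhibit a potential (doubtful in this generality) or to fall back on that existence-plus-purification argument.

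A further remark: your observation that Definition~4 freezes $spf^*$ in the customer deviation test, whereas backward induction evaluates customer deviations against the re-optimised response $spf^{\dagger}(spl)$, is sharp---these are inequivalent tests, and the paper's own proof passes silently from one to the other. But since your proposal leaves this mismatch unresolved as well, it is a second open step in your plan (one shared with the paper), not a substitute for the missing leader-stage existence argument.
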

\end{tcolorbox}
\textbf{See detailed proof of Theorem \ref{SE} in Appendix B.2}. We also provide a corollary to show the uniqueness of SE in the Stackelberg Game for special conditions. In addition to the main Theorem above about the existence of NE and SE, our customer-led Stackelberg game model also exhibits some special properties for customers and companies, as demonstrated in the following propositions.

\begin{proposition}\label{Th2} Given feasible tasks $K$ and NE strategy profile $spf^*$, for a team $m$, if a company $y\in m$, and there exist any tasks $k$ such that $TA_{M^{spf^*}_K}(k)=m$ and $\exists s\in R_k\bigcap S_y$, then $y'\not\in Y_m$, where $y\in Y$ and $w(y',y)=1$.
\end{proposition}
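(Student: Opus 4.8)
The plan is a proof by contradiction built on the Nash best-response inequality $r^K_z(spf^*)\ge r^K_z(spf)$ of the NE definition, exploiting that $w(y',y)=1$ encodes a coverage redundancy. First I would translate the hypothesis: by the edge-weight definition $w(y',y)=|S_{y'}\cap S_y|/|S_y|$, so $w(y',y)=1$ is exactly $S_y\subseteq S_{y'}$, i.e. company $y'$ already offers every service that $y$ offers, including the service $s\in R_k\cap S_y$ that makes $y$ \emph{active} for the won task $k$. I would then record the structural consequence that $y$ adds no coverage on top of $y'$: for any team with $y,y'\in Y_m$, one has $\bigcup_{z\in Y_m}S_z=\bigcup_{z\in Y_m\setminus\{y\}}S_z$, so deleting $y$ leaves $R_m$, the eligible-task set $\{k:R_k\subseteq R_m\}$, and in particular the winning of $k$, unchanged.

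Next I would assume toward a contradiction that at the NE $spf^*$ both $y$ and $y'$ lie in the same team $m$. Writing each member's payoff through $r_z(spf^*)=\lambda E_m/|Y_m|-\sum_{s\in S_z}c_z(s)$, I would observe that the own-cost term is invariant under any change of teammates and therefore cancels in every deviation comparison, so the only quantity a company can influence by re-teaming is the shared profit $E_m/|Y_m|$. The goal is thus to exhibit a unilateral deviation that strictly increases this shared term for some company, violating the equilibrium inequality.

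The intended mechanism is that the redundant pairing is wasteful: the same eligible tasks, and the same win of $k$, are achievable without the overlap, so separating the redundant capacity lets the productive core split a no-smaller $E_m$ among strictly fewer members, raising $E_m/|Y_m|$. The delicate point is that the company whose departure raises the core's share need not be the company that gains by departing (a free-rider tension), so I would argue that $y'$, which subsumes $y$, either is itself better off leaving the diluted team for a leaner configuration it can sustain, or its presence strictly depresses the team's offer competitiveness so that a productive incumbent strictly gains by exiting; either way some company strictly improves, so $spf^*$ cannot be an NE unless $y'\notin Y_m$.

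I expect the main obstacle to be making the inequality ``$E_m$ does not fall when the redundant capacity is removed'' precise. Unlike coverage, $E_m=\sum_{k:R_k\subseteq R_m}Pr(m,k)\sum_{s\in R_k}Pay^k_s$ runs through the win probabilities $Pr(m,k)$, which are computed from the i.i.d. service-offer variables $P^m_s\in[\min_{z\in Y_m}o_z(s),\max_{z\in Y_m}o_z(s)]$ via the normal/extreme-value approximation; altering membership perturbs the support $[\min o,\max o]$ on each overlapping service and can move $Pr(m,k)$ in either direction. The crux is therefore a monotonicity (stochastic-dominance) statement, namely that dropping a member whose service set is already subsumed cannot worsen the team's offer distribution on the tasks it still wins, combined with bookkeeping on the $1/|Y_m|$ split so that the net per-member share strictly increases. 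Once that monotonicity is established, the coverage accounting and the contradiction with the NE inequality are routine.
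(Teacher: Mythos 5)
Your setup is sound---proof by contradiction via the NE inequality, and the correct reading of $w(y',y)=1$ as $S_y\subseteq S_{y'}$---but the core of the argument is missing, and you say so yourself. The NE definition only constrains \emph{unilateral} deviations, so the proof must name a specific company and a specific deviation under which \emph{that company's own} revenue strictly increases. Your proposal leaves exactly this open: you correctly identify the free-rider tension (the member whose removal raises the team's per-capita share is not the member who profits by leaving), and then resolve it with an asserted disjunction---either $y'$ gains by leaving, or its presence "depresses competitiveness" so that some productive incumbent gains by exiting---without proving either branch or that the two branches are exhaustive. That disjunction essentially \emph{is} the proposition, so the proposal stops where the proof should begin. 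Moreover, the lemma you flag as the crux (stochastic dominance of $E_m$ when the subsumed member $y$ is dropped) is aimed at the wrong quantity: even if $E_m$ is unchanged after removing $y$, this creates no incentive for anyone to deviate, since the beneficiaries (the remaining members) have not changed strategy, and $y$ itself may well be worse off as a singleton. No amount of monotonicity of the \emph{team's} profit substitutes for a comparison of the \emph{deviator's} revenue.

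The paper's proof picks the deviator concretely: it is $y'$, the subsuming company, that leaves. Assume $y,y'\in Y_m$ with a won task $k$ (the paper specializes to $R_k=\{s\}$ with $s\in S_y$); let $y'$ unilaterally deviate so that the resulting formation contains $m'$ with $Y_{m'}=\{y'\}$ and $m''$ with $Y_{m''}=Y_m\setminus\{y'\}$. Because $s\in S_y\subseteq S_{y'}$, the singleton $m'$ still covers the task; the paper then argues $o^{m'}_s<o^m_s$, so $TA_{M^{spf}_K}(k)=m'$ and $y'$ collects the full payment $Pay^k_s$ instead of a $1/|Y_m|$ share of the team's expected profit, giving $r^K_{y'}(spf)>r^K_{y'}(spf^*)$ and contradicting NE. Note two things this buys that your route does not: (i) the deviation and the strict gain are attached to one named player, so the NE inequality applies directly; and (ii) the single-service specialization $R_k=\{s\}$ is what guarantees the lone deviator $y'$ can still cover the task---for a general $R_k$ your "leaner configuration it can sustain" may simply not exist, since $S_{y'}$ need not contain $R_k\setminus S_y$. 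In short, the gap is the absence of an explicit profitable unilateral deviation; the stochastic-dominance machinery you propose does not fill it.
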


\begin{proposition}\label{le1} Given any task allocation $(spl, spf)$, if\\ $u_x(spl, spf)=0$, then $R_k\not\subseteq S_y$ for all $y\in Y$ and $k\in K^{spl}$.
\end{proposition}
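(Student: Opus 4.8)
The plan is to prove the statement by contraposition, after first fixing its reading. The conclusion is really about the task $k$ that contains $x$ (so $x\in X_k$ with $k\in K^{spl}$), and it asserts that no single company can cover that task on its own. I would read the hypothesis $u_x(spl,spf)=0$ as the non-completion outcome for $x$, i.e. the branch of Eq.~(\ref{utility_leader}) triggered by $TA_{M^{spf}_K}(k)=\emptyset$. Under this reading the target becomes: if some company $y\in Y$ satisfies $R_k\subseteq S_y$, then $x$'s task is necessarily allocated, i.e. $TA_{M^{spf}_K}(k)\neq\emptyset$, which contradicts the hypothesis and forces $R_k\not\subseteq S_y$ for every $y$.

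First I would invoke the structural fact that every team formation partitions the companies. By Definition 1 the teams of $M^{spf}_K$ are exactly the classes of companies sharing a common strategy, and equality of strategies is an equivalence relation; together with the feasibility requirements $\bigcup_{m}Y_m=Y$ and $Y_m\cap Y_{m'}=\emptyset$, this means every company lies in exactly one team. In particular the assumed company $y$ belongs to some team $m\in M^{spf}_K$.

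Next I would propagate the covering property from the company up to its team. Since $R_m=\bigcup_{y'\in Y_m}S_{y'}\supseteq S_y$ and, by assumption, $S_y\supseteq R_k$, we obtain $R_k\subseteq R_m$, so team $m$ has the resources to perform task $k$. Hence the eligible set $\{m':R_k\subseteq R_{m'}\}$ is non-empty, and because the allocation selects a winner among the eligible teams (the minimiser in the definition of $Pr(m,k)$), we get $TA_{M^{spf}_K}(k)\neq\emptyset$. By Eq.~(\ref{utility_leader}) this places $x$ in the completed-task branch, so $u_x(spl,spf)=f^k_x\neq 0$, the desired contradiction.

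I expect the only genuine obstacle to be interpretive rather than computational: pinning down precisely which event the value $u_x=0$ encodes (reconciling it with the two-branch utility and the penalty $\xi$), and restricting the quantifier so that $k$ ranges over $x$'s own task rather than over all of $K^{spl}$, since the literal ``for all $k$'' would already fail whenever some \emph{other} task happens to be coverable by a single company. Once the hypothesis is identified with $TA_{M^{spf}_K}(k)=\emptyset$, the remainder is the short partition-plus-monotonicity chain above and requires no estimates.
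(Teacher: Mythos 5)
Your proposal is correct and takes essentially the same route as the paper's own proof: both argue by contradiction, observing that any company $y$ with $R_k\subseteq S_y$ must belong to some team $m$ of the partition $M^{spf}_{K^{spl}}$, whence $R_k\subseteq S_y\subseteq R_m$, so the task would be allocated ($TA_{M^{spf}_{K^{spl}}}(k)\neq\emptyset$), contradicting $u_x(spl,spf)=0$. The interpretive caveats you flag --- reading $u_x=0$ as the non-completion branch of Eq.~(\ref{utility_leader}) and restricting $k$ to the task containing $x$ --- are also implicitly assumed, without comment, in the paper's proof.
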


\begin{proposition}\label{th6} Given a strategy profile of leader game $spl$ with associated tasks $K^{spl}$, if there exists $(spl,spf)$ and $(spl,spf')$, then $\sum_{y\in Y} r^{K^{spl}}_y(spf)=\sum_{y\in Y} r^{K^{spl}}_y(spf')$. In addition, if two task allocations $(spl,spf)$ and $(spl', spf')$ are both at SE, then $u_x(spl,spf)=u_x(spl', spf')$.
\end{proposition}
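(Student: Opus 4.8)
The plan is to treat the two assertions in turn, reducing each to an invariance that survives the possible non-uniqueness of the underlying equilibria. Throughout I write $W_k=\sum_{s\in R_k}Pay^k_s$ for the total payment attached to task $k$, a quantity fixed once the leader profile (hence $K^{spl}$) is fixed and independent of the companies.

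For the first claim I would sum the per-company identity $r^{K^{spl}}_y(spf)=\lambda\,E_m[spf]/|Y_m|-\sum_{s\in S_y}c_y(s)$ over all $y\in Y$. Since a feasible team formation partitions $Y$, each of the $|Y_m|$ members of team $m$ receives $\lambda\,E_m[spf]/|Y_m|$, so summing over the members of a team returns $\lambda\,E_m[spf]$, and
\[
\sum_{y\in Y}r^{K^{spl}}_y(spf)=\lambda\sum_{m\in M^{spf}_{K^{spl}}}E_m[spf]-\sum_{y\in Y}\sum_{s\in S_y}c_y(s).
\]
The double cost sum is a constant $C$ fixed by the instance alone. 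I would then expand $E_m[spf]=\sum_{k:\,R_k\subseteq R_m}Pr(m,k)\,W_k$ and interchange the two summations to obtain $\sum_m E_m[spf]=\sum_{k\in K^{spl}}W_k\sum_{m:\,R_k\subseteq R_m}Pr(m,k)$. Because the events ``team $m$ posts the lowest offer for $k$'' are mutually exclusive and exhaust the teams able to serve $k$, the inner probability sum is $1$ when some team covers $R_k$ and $0$ otherwise, so $\sum_m E_m[spf]=\sum_{k\ \text{completed}}W_k$. Equality of the two aggregate revenues then follows provided $(spl,spf)$ and $(spl,spf')$ complete the same tasks. Justifying this is the delicate point of the first part, since completability of $k$ depends on whether some team of the (possibly different) formation covers $R_k$ and no single company can unilaterally merge teams; I would discharge it through the definition of a task allocation, in which $TA_{M^{spf}_{K^{spl}}}$ serves each completable task, together with optimality of both profiles for the common $K^{spl}$, arguing that a formation serving strictly fewer tasks would leave positive expected profit $\lambda\,W_k$ unclaimed and hence could not be a best response.

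For the second claim I would first use the penalty $\xi\gg f^k_x$: at any SE no customer is left with an uncompleted task when a completion is attainable, since moving to a served grouping strictly lowers that customer's value from $\xi$ to a finite $f^k_x$, so every SE yields $u_x=f^k_x$ for all $x$. It then remains to show these finite payments coincide across the two (now possibly distinct) leader profiles $spl$ and $spl'$. I would exploit the strict monotonicity of $\delta$: because $f^k_x=\sum_{s\in S_x}\delta(d^k_s)p_x(s)$ is strictly decreasing in each co-membership count $d^k_s$, every customer is pulled toward the grouping of maximal service overlap, and the leader-game best-response inequalities of the SE definition place each customer in a payment-minimal task given the others. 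The step I expect to be the genuine obstacle is upgrading this structural pull into \emph{per-customer} equality, since two SE could a priori split the customers into different but payment-equivalent partitions; I would close the gap by combining the aggregate identity $\sum_{x\in X_k}f^k_x=W_k$ with the best-response conditions, showing that a customer receiving a strictly larger payment under one SE could, by an admissible unilateral change of $A_x$ (when the competing grouping is reachable), replicate the co-membership pattern it enjoys under the other and thereby violate the SE inequality there, which forces $u_x(spl,spf)=u_x(spl',spf')$ for every $x$.
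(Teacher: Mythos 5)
Your algebraic reduction in the first part is correct and, in fact, more explicit than anything the paper offers: summing $r^{K^{spl}}_y$ over the partition induced by $M^{spf}_{K^{spl}}$, pulling out the fixed cost $C=\sum_{y\in Y}\sum_{s\in S_y}c_y(s)$, and collapsing $\sum_{m:\,R_k\subseteq R_m}\Pr(m,k)$ to $1$ for every servable task (the winning events being disjoint and exhaustive) correctly yields $\sum_{y\in Y}r^{K^{spl}}_y(spf)=\lambda\sum_{k\ \text{servable}}W_k-C$. For comparison, the paper's own ``proof'' is essentially a citation: it appeals to Lemma~2.9 of Roughgarden and Tardos (an essential-uniqueness result for nonatomic routing games) and restates the claim, with no computation. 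The genuine gap is exactly the step you flag as delicate, and your proposed closure does not work. Under Definitions~1 and~3, a team (or task) exists only when \emph{all} of its members play the \emph{identical} strategy, so no unilateral deviation can ever create or join a coalition. Hence ``unclaimed expected profit'' does not contradict equilibrium: take $S_{y_1}=\{s_1\}$, $S_{y_2}=\{s_2\}$ and a single task with $R_k=\{s_1,s_2\}$. The profile in which each company plays $\{y_i\}$ is a Nash equilibrium (a deviation by $y_1$ alone to $\{y_1,y_2\}$ leaves the strategies unequal, so the teams are unchanged), the task goes unserved, and total revenue is $-C$; the profile in which both play $\{y_1,y_2\}$ is also a Nash equilibrium with total revenue $\lambda W_k-C$. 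So two equilibria over the same $K^{spl}$ can complete different task sets, and your ``a best response would claim the profit'' argument cannot be repaired by best-response reasoning alone.

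The second part fails for the same structural reason. A customer cannot ``replicate the co-membership pattern it enjoys under the other SE'' by changing only $A_x$: the members of the target group are playing a strategy different from $x$'s new one, so $x$ ends up in a singleton task and the SE inequality is never violated. Concretely, take two customers with identical needs $\{s\}$: the profile where both play $\{x_1,x_2\}$ and the profile where both play singletons are each leader-game equilibria when paired with follower best responses, yet the payments differ by the factor $\delta(2)<1$, so per-customer utility is \emph{not} invariant across SEs. What your argument (and, implicitly, the paper's citation) actually requires is an additional assumption that forces uniqueness of the relevant maximiser/minimiser --- precisely the strict-monotonicity hypotheses the paper introduces in its uniqueness corollaries (Corollaries~1 and~2 in Appendix~B). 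Without such an assumption, neither half of the proposition follows from equilibrium conditions, so the obstacle you identified is not a technical inconvenience but a counterexample-backed failure of the claim as you set out to prove it.
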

\textbf{See detailed proof in Appendix B.3}.

\section{Experiment and Results}
\begin{figure}[ht]
    \centering
    \includegraphics[scale=0.22]{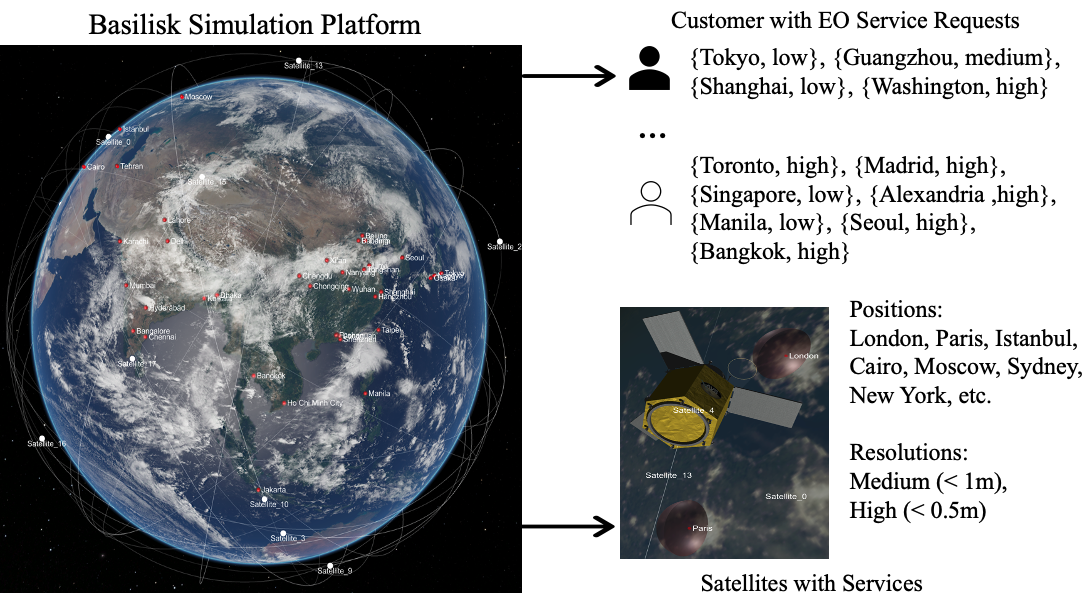}
    \caption{An example of Basilisk simulation with simplified customers' information and company services.}\label{Simulation}
\end{figure}

\subsection{Experiments Settings}
This section presents the experimental setup and results for the satellite constellation services market. The market leverages space technologies to observe and monitor the Earth's surface and atmosphere, primarily through satellite-based systems. We consider a satellite constellation services market as a MAS with customers and satellite companies. Customers request EO images of 80 cities, each with a population over $5$ million, requiring varying resolutions. Satellite companies offer distinct services based on their satellite capabilities. For the testing environment, we extended the open-source Basilisk simulation platform \cite{kenneally2020basilisk}, a modular framework designed for spacecraft simulation. Using Basilisk, we generated LEO satellites with specific orbital parameters—right ascension, argument of perigee, and inclination—derived from city coordinates to determine satellite coverage and ensure realistic service availability. The simulation accounts for real-world constraints such as satellite visibility windows and orbital dynamics. Additionally, customer preferences for image resolution and delivery time are incorporated into the scenario to ensure practical relevance. \textbf{Fig.~\ref{Simulation}} illustrates the simulation setup, showing customers, companies, EO service locations, and image resolutions. Details of the experiment settings and algorithms are provided in \textbf{Appendix C and D}.

\begin{table}[ht]\small
    \centering
    \begin{tabular}{lllll}
        \hline
        \textbf{Scenarios}& \textbf{Customers} & \textbf{Needs} & \textbf{Companies} & \textbf{Services}\\[2pt]
        \hline
        1 & 5000 & [1, 10] & 30 & [10, 30] \\[2pt]
        2 & 5000 & [1, 10] & 15 & [10, 30] \\[2pt]
        3 & 5000 & [1, 10] & 15 & [30, 50] \\[2pt]
        4 & 5000 & [1, 5] & 30 & [10, 30] \\[2pt]
        5 & 10000 & [1, 10] & 30 & [10, 30] \\[2pt]
        \hline
    \end{tabular}
    \caption{The configuration of customers and companies.}
    \label{EXPSETTING}
\end{table}

\textbf{Table \ref{EXPSETTING}} summarises the testing scenarios, varying customer numbers and company service offerings. Scenario 1 is the baseline, while scenario 2 reduces the number of companies. Scenario 3 assumes each company operates two satellites, scenario 4 reduces customer needs, and scenario 5 increases customer numbers. Customers are grouped into five clusters based on task quantities: $200, 400, 600, 800$, and $1000$. Team formation among companies depends on service overlap, with a similarity of $0$ indicating no overlap and $0.5$ allowing up to $50\%$ overlap. Each scenario was repeated 10 times to ensure reliability, and results were averaged under identical parameters starting with scenario 1.

\begin{figure}[htbp]
    \centering
    \includegraphics[width=0.45\textwidth]{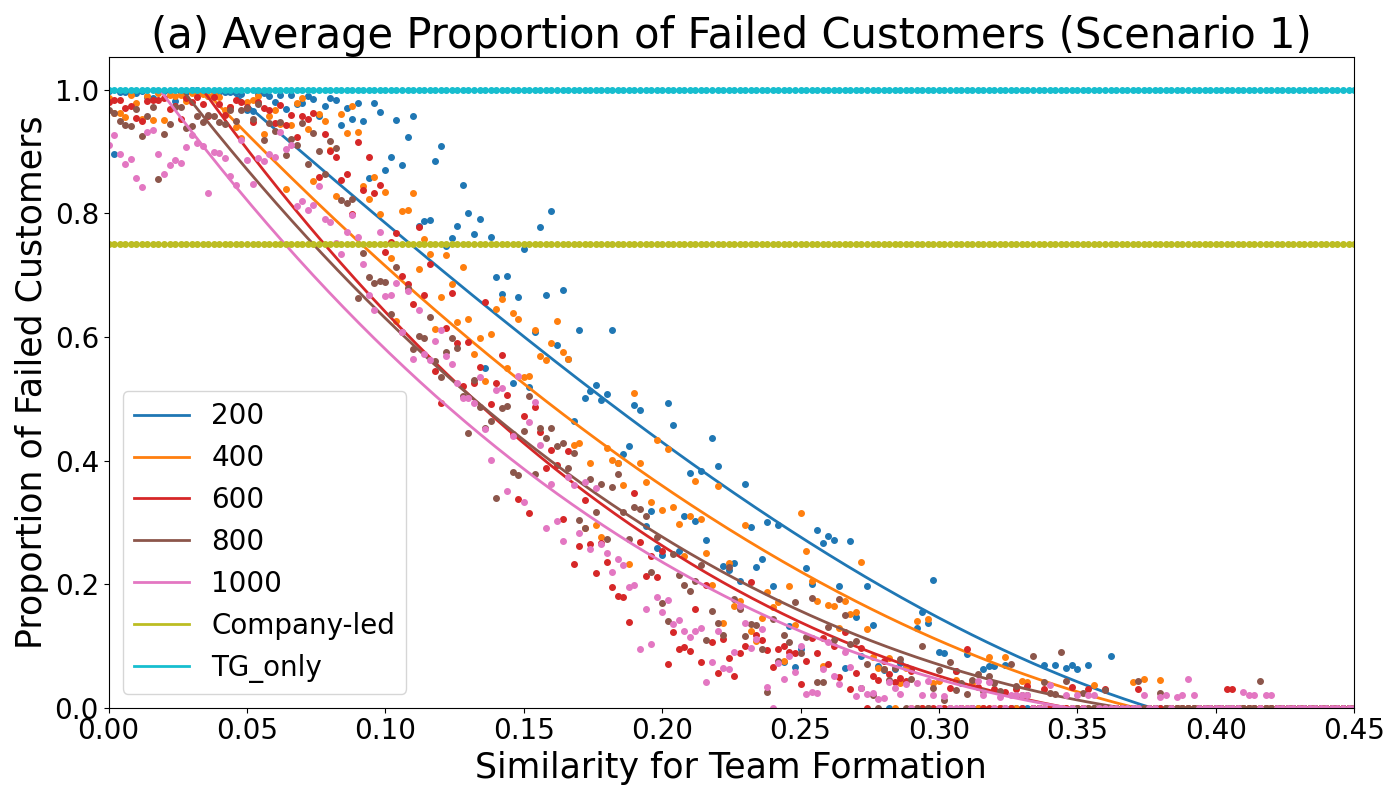}
    \vspace{1em}
    \includegraphics[width=0.45\textwidth]{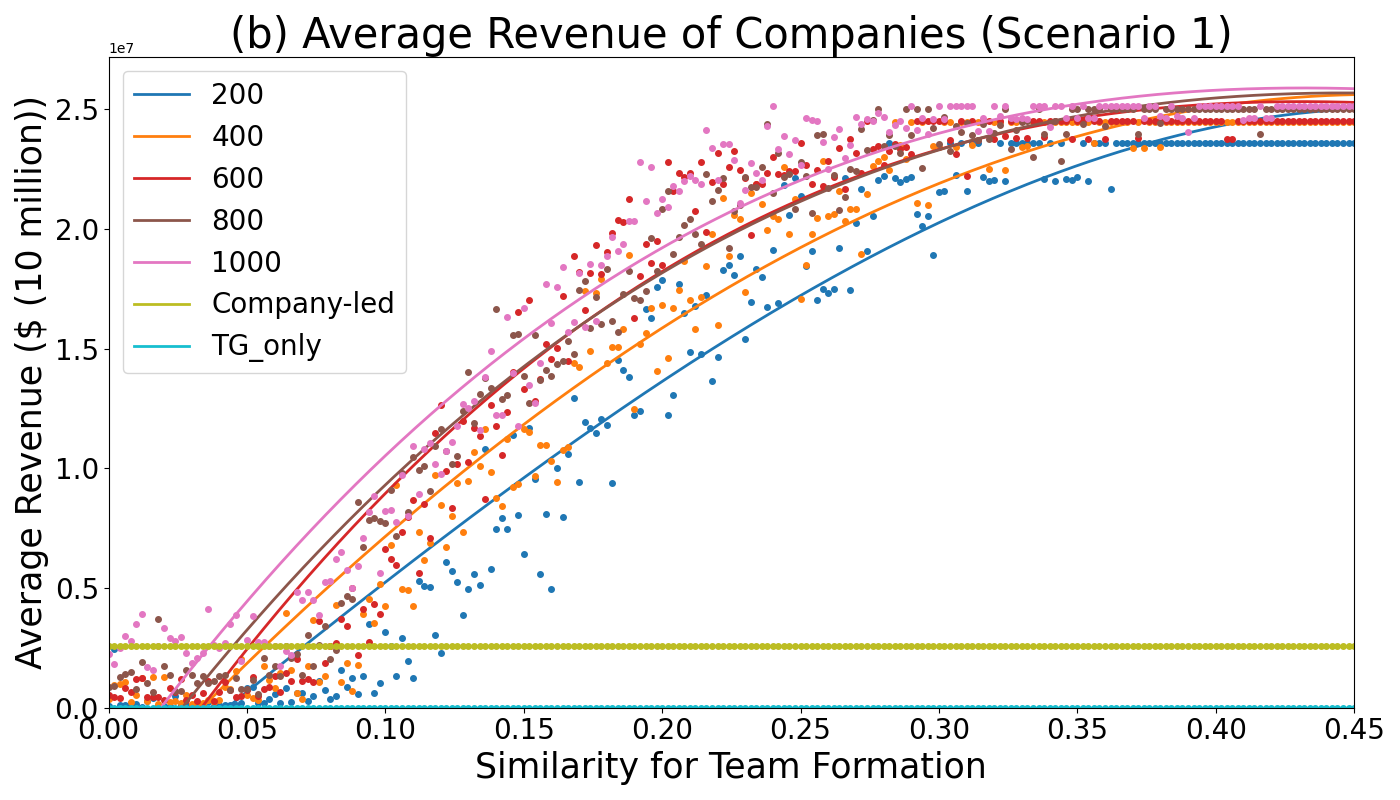}
    \vspace{-5mm}
    \caption{Task allocation performance for satellite constellation services trading in scenario 1: (a) Average proportion of failed customers, and (b) Average revenue of companies. The graphs show original data points and fitting lines.}
    \label{failed_customers_s1}
\end{figure}

\subsection{Numerical Results}
\textbf{Fig.~\ref{failed_customers_s1}(a)} displays the percentage of customers with uncompleted tasks. The yellow line, labelled "Company-led", indicates that approximately $75\%$ of customers have unmet needs in scenarios where tasks are independently submitted without team formation. This line serves as a standard reference. The coloured data points show the distribution of customer task completion across different task quantities and increasing company similarities. Task completion rates are observed to surpass the baseline when the similarity exceeds $0.13$ and approach a $100\%$ success rate as similarity increases. However, setups with fewer tasks exhibit a higher percentage of uncompleted tasks at the same similarity level. For instance, at a similarity of $0.15$, about $60\%$ of customers with 200 tasks remain uncompleted, compared to $40\%$ with 1000 tasks. \textbf{Fig.~\ref{failed_customers_s1}(b)} presents polynomial fitting functions for companies' average revenue relative to task quantities, marked by differently coloured data points. The yellow line acts as a baseline for comparison, while the light blue line represents scenarios where customers generate tasks without company team formation. The graph demonstrates that higher similarity thresholds significantly increase average revenue for companies. Revenue surpasses the baseline once the similarity exceeds $0.08$. Additionally, at a fixed similarity, more tasks correlate with higher average revenue; for example, at $0.15$, revenue from $1000$ tasks is about $1.5$ times that from $200$ tasks.

\begin{figure}[htbp]
    \centering
    \includegraphics[width=0.45\textwidth]{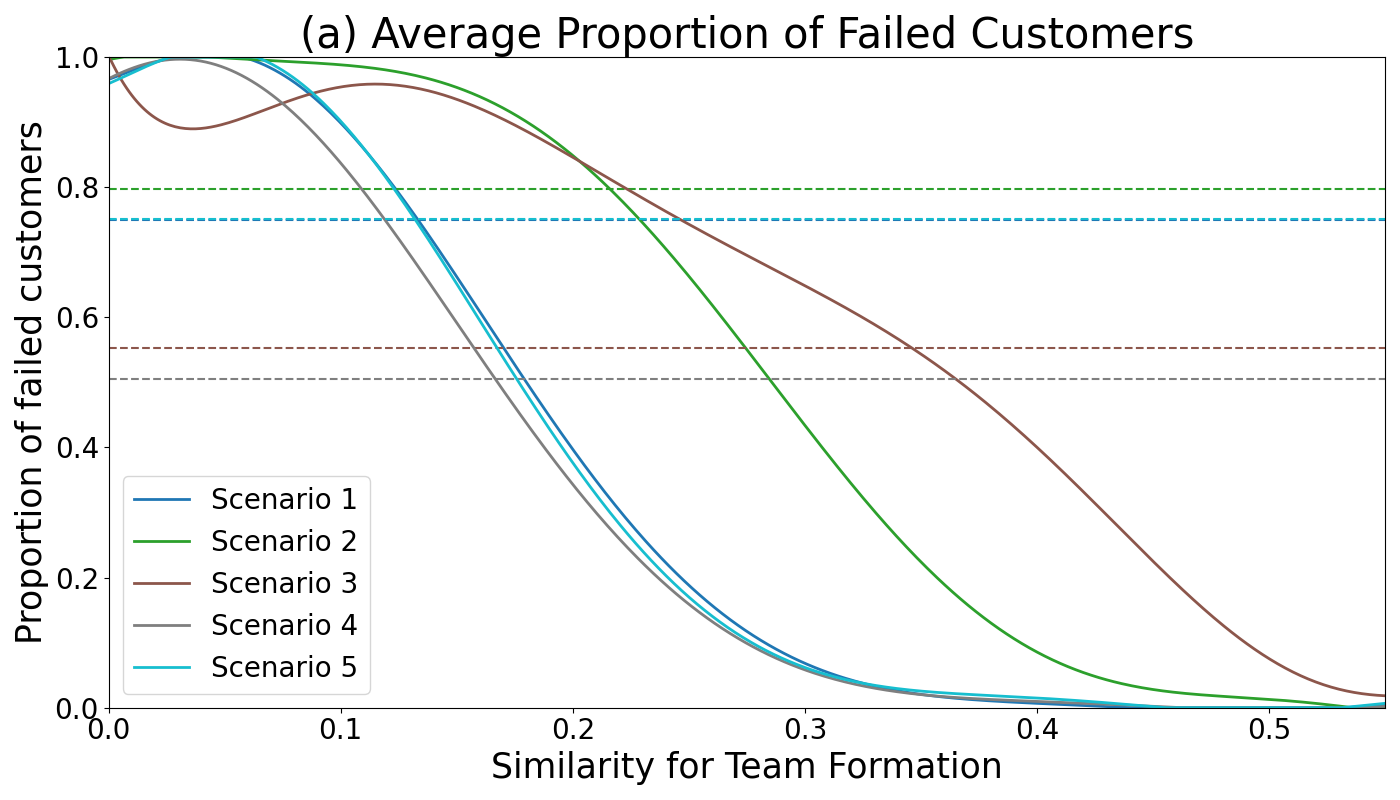}
    \vspace{1em}
    \includegraphics[width=0.45\textwidth]{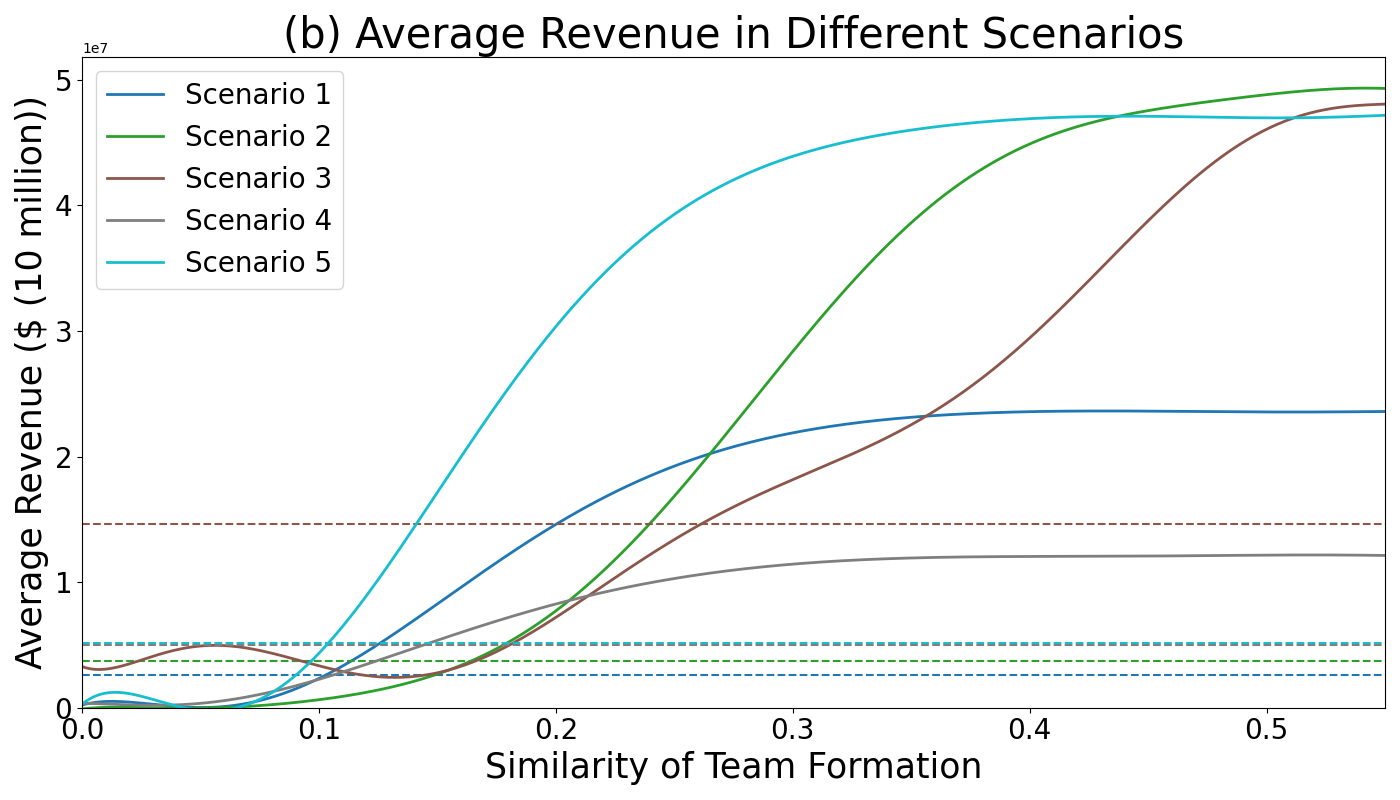}
    \vspace{-5mm}
    \caption{Task allocation performance for satellite constellation services trading of 200 tasks for five scenarios: (a) Average proportion of failed customers, and (b) Average revenue of companies. Solid lines refer to ``Customer-led", and dotted lines refer to ``Company-led" for five scenarios.}
    \label{combined_all}
\end{figure}

\textbf{Fig.~\ref{combined_all}} shows the effect of varying similarity settings on customer allocation failure and average company revenue for $200$ tasks. \textbf{Fig.~\ref{combined_all}(a)} indicates that as similarity increases, scenarios 1, 2, and 5 show nearly identical trends, suggesting similar effects from increasing customer numbers and reducing service demands on customer allocation failures. Reducing the number of companies significantly complicates team formation, especially under Scenario 3, where overlapping service areas hinder team formation at low similarity levels. \textbf{Fig.~\ref{combined_all}(b)} reveals that scenarios 2, 3, and 5 have similar revenue due to customer or company numbers changes, while Scenario 4 has the lowest revenue due to reduced customer demands and payments. \textbf{Table~\ref{NE_all}} presents results of different scenarios in the NE state, comparing 200 tasks to 1000 tasks. These results align with those in \textbf{Table~\ref{NE}}, indicating that task decreases lead to lower company revenues and customer payments, and vice versa. Contrary to other scenarios, Scenario $2$ requires lower similarity to reach NE with $200$ tasks than $1000$ tasks. Full detailed results are in \textbf{Appendix E}.

\begin{table}[h]\footnotesize
    \setlength{\tabcolsep}{0.3pt}
    \begin{tabular}{lllllll}
        \hline
        \textbf{Tasks}&\textbf{SC} & \textbf{Similarity} & \textbf{Utilities} & \textbf{Company-led} & \textbf{Customer-led} & \textbf{PRF}\\[1pt]
        \hline
        200 &1 & 0.364 & Payment & 172,713 & 141,500 &  -18\%    \\[2pt]
           & &       & Revenue  & 2,612,353 & 23,583,488& +900\%  \\[2pt]
        &2 & 0.474 & Payment & 172,681 & 147,413 &  -15\%    \\[2pt]
         &   &       & Revenue  & 3,918,280 & 49,137,997 & +1254\%  \\[2pt]
        &3 & 0.538 & Payment & 173,392 & 148,033  & -14.5\%    \\[2pt]
         &   &       & Revenue & 14,435,611 & 49,344,631 & +341\%  \\[2pt]
        &4 & 0.36 & Payment & 94,196 & 72,760  & -23\%    \\[2pt]
         &   &       & Revenue & 5,060,874 & 12,126,813 &  +239\% \\[2pt]
        &5   & 0.344 & Payment & 172,747 & 141,438  & -19\%    \\[2pt]
         &   &       & Revenue & 5,475,274 & 47,146,183 & +861\%    \\[2pt]
        \hline
        1000 &1   & 0.304 & Payment & 172,713 & 150,816    &  -12.5\%  \\[2pt]
        &    &       & Revenue & 2,578,431 & 25,136,018  &  +974\% \\[2pt]
        &2   & 0.528 & Payment & 173,427 & 153,597 &  -11\%    \\[2pt]
        &    &       & Revenue  & 3,580,109 & 51,199,116 & +1430\%  \\[2pt]
        &3 & 0.334 & Payment & 172,566 & 152,874  & -11\%    \\[2pt]
        &    &       & Revenue & 15,011,309 & 50,958,118 & +339\%  \\[2pt]
        &4 & 0.274 & Payment & 94,422 & 80,815  & -14\%    \\[2pt]
        &    &       & Revenue & 5,038,154 & 13,469,243 &  +267\% \\[2pt]
        &5   & 0.3 & Payment & 172,293 & 149,938  & -13\%    \\[2pt]
        &    &       & Revenue & 5,104,622 & 49,979,608 & +979\%    \\[2pt]
        \hline
    \end{tabular}
    \caption{Average customers' payment and companies' revenue at Nash equilibrium in 200 and 1000 tasks for five scenarios (SC).}
    \label{NE_all}
\end{table}

\section{Conclusion}
In conclusion, this paper introduced a customer-led Stackelberg game model for cost-efficient task allocation in the satellite constellation services market. The model reduces payment costs by enabling customers to generate tasks through group-buying strategies and allows companies to form teams to optimise resources. Theoretical guarantees ensure the existence of both Stackelberg and Nash equilibriums. A case study on the Earth observation market shows significant benefits, including a 23\% reduction in customer payments and a 6.7-time increase in company revenues.

\bibliographystyle{named}
\bibliography{ARXIV_IJCAI2025_Jianglin/Arxiv}

\appendix
\section{Examples of Strategy Profiles for Follower and Leader Game}

\subsection{Follower Game}
Let us extend the example from Section 4.1 of the main paper to consider a scenario involving three companies \( Y =\\\{y_1, y_2, y_3\} \). Each company can work independently or collaborate with other companies to complete tasks. The strategies available to each company can be defined as \\\( A_{y_1}, A_{y_2}, A_{y_3} \), where each strategy represents the set of companies a given company wishes to collaborate with.

\subsubsection*{Example 1: Same Team Formation Resulting from Different Strategy Profiles}
\begin{itemize}
    \item \textbf{Companies' Choices 1:}
        \begin{itemize}
            \item $A_{y_1} = \{y_1\}$ (Company $y_1$ chooses to work independently)
            \item $A_{y_2} = \{y_2, y_3\}$ (Company $y_2$ wants to collaborate with $y_3$)
            \item $A_{y_3} = \{y_3\}$ (Company $y_3$ chooses to work independently)
        \end{itemize}
    \item \textbf{Companies' Choices 2:}
        \begin{itemize}
            \item $A_{y_1} = \{y_1, y_2\}$ (Company $y_1$ chooses to work independently)
            \item $A_{y_2} = \{y_2\}$ (Company $y_2$ wants to collaborate with $y_3$)
            \item $A_{y_3} = \{y_1,y_3\}$ (Company $y_3$ chooses to work independently)
        \end{itemize}
    \item \textbf{Team Formation:} In both strategy profiles 1 and 2, the result is the same: three independent teams are formed 
        \begin{itemize}
            \item Team $m_1$, where $Y_{m_1}=\{y_1\}$;
            \item Team $m_2$, where $Y_{m_2}=\{y_2\}$;
            \item Team $m_3$, where $Y_{m_3}=\{y_3\}$.
        \end{itemize}
\end{itemize}

\subsubsection*{Example 2: Different Strategy Profiles Leading to Two Teams}
\begin{itemize}
    \item \textbf{Companies' Choices:}
        \begin{itemize}
        \item $A_{y_1} = \{y_1, y_2\}$ (Company $y_1$ wants to collaborate with $y_2$)
        \item $A_{y_2} = \{y_1, y_2\}$ (Company $y_2$ wants to collaborate with $y_1$)
        \item $A_{y_3} = \{y_3\}$ (Company $y_3$ chooses to work independently)
        \end{itemize}
    \item \textbf{Team Formation:} In this case, $y_1$ and $y_2$ form a collaborative team, while $y_3$ works independently, resulting in two teams:
        \begin{itemize}
            \item Team $m_1$, where $Y_{m_1}=\{y_1,y_2\}$;
            \item Team $m_3$, where $Y_{m_3}=\{y_3\}$.
        \end{itemize}
\end{itemize} 

\subsubsection*{Example 3: Complex Strategy Profiles for Team Formation}
\begin{itemize}
    \item \textbf{Companies' Choices:}
        \begin{itemize}
            \item $A_{y_1} = \{y_1, y_2, y_3\}$ (Company $y_1$ wants to collaborate with all companies)
            \item $A_{y_2} = \{y_2, y_3\}$ (Company $y_2$ wants to collaborate with $y_3$)
            \item $A_{y_3} = \{y_1, y_3\}$ (Company $y_3$ wants to collaborate with $y_2$)
        \end{itemize}
    \item \textbf{Team Formation:} Since Company 1 wants to collaborate with Companies 2 and 3, Company 2 wants to collaborate with Company 3, but Company 3 wants to collaborate with Company 1. This results in a failed team formation strategy, leading to all companies forming independent teams.
        \begin{itemize}
            \item Team $m_1$, where $Y_{m_1}=\{y_1\}$;
            \item Team $m_2$, where $Y_{m_2}=\{y_2\}$;
            \item Team $m_3$, where $Y_{m_3}=\{y_3\}$.
        \end{itemize}
\end{itemize}

\subsection{Leader Game}
We provided examples of strategy profiles with three customers $X = \{x_1, x_2, x_3\}$ for the leader's game for customers by replacing the companies with customers from the above examples. 

\subsubsection*{Example 4: Complex Strategy Profiles for Task Generation}
\begin{itemize}
    \item \textbf{Customers' Choices:}
        \begin{itemize}
            \item $A_{x_1} = \{x_1, x_2, x_3\}$ (Customer $x_1$ wants to collaborate with all customers)
            \item $A_{x_2} = \{x_2, x_3\}$ (Customer $x_2$ wants to collaborate with $x_3$)
            \item $A_{x_3} = \{x_1, x_3\}$ (Customer $x_3$ wants to collaborate with $x_2$)
        \end{itemize}
    \item \textbf{Task Generation:} Since Customer 1 wants to collaborate with Customer 2 and 3, Customer 2 wants to collaborate with Customer 3, but Customer 3 wants to collaborate with Customer 1. This results in a failed task generation.
        \begin{itemize}
            \item Independent task $k_1$, where $X_{k_1}=\{x_1\}$.
            \item Independent task $k_2$, where $X_{k_2}=\{x_2\}$.
            \item Independent task $k_3$, where $X_{k_2}=\{x_3\}$.
        \end{itemize}
\end{itemize}

\subsubsection*{Example 5: Strategy Profile Leading to One Joint Task}
\begin{itemize}
    \item \textbf{Customers' Choices:}
        \begin{itemize}
            \item $A_{x_1} = \{x_1, x_2, x_3\}$ (Customer $x_1$ wants to collaborate with all customers)
            \item $A_{x_2} = \{x_1, x_2, x_3\}$ (Customer $x_2$ wants to collaborate with all customers)
            \item $A_{x_3} = \{x_1, x_2, x_3\}$ (Customer $x_3$ wants to collaborate with all customers)
        \end{itemize}
    \item \textbf{Task Generation:} Since consumers 1, 2, and 3 all wish to collaborate and propose a task together, the strategy profile includes only one joint task.
        \begin{itemize}
            \item Joint task $k_1$, where $X_{k_1}=\{x_1,x_2,x_3\}$.
        \end{itemize}
\end{itemize}

\section{Detailed Proof for Theorem 1, Theorem 2 and Propositions} 

\subsection{Proof for Theorem 1}
We first need to analyse Nash equilibrium in the follower's game to investigate the properties of our designed marketplace based on the Stackelberg game. 

\begin{tcolorbox}[colframe=black, colback=white, boxsep=0pt, left=2pt, right=2pt, top=2pt, bottom=2pt]
\begin{theorem}
At least one pure strategy Nash equilibrium $spf^*$ exists in the follower game for any given feasible tasks $K$.
\end{theorem}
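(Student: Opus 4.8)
The plan is to exploit the finiteness of the strategy space together with the special ``consensus'' rule governing team formation, rather than to invoke a fixed-point theorem (which would only deliver a mixed equilibrium). First I would record the two facts I rely on: (i) each $\mathcal A_y$ is a subset of the finite set $Y$, so $SPF$ is finite and it suffices either to exhibit one profile at which no company has a profitable deviation or to rule out infinite strict-improvement paths; and (ii) by the definition of $M^{spf}_K$, a team $Y_m$ materialises only when all of its members play the \emph{same} strategy, so no single company can unilaterally bring a multi-company team into existence. A deviation by $y$ can therefore only keep it a singleton, remove it from its current team (reshaping that team), or activate a team whose other members had \emph{already} independently selected that same strategy.

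The cleanest route to existence is then to verify directly that the all-singleton profile $spf^0$ with $A_y=\{y\}$ for every $y\in Y$ is a pure Nash equilibrium. At $spf^0$ the companies play pairwise-distinct strategies, so the realised team formation is the finest partition. If some company $y$ deviates to any $A'_y$ with $y\in A'_y$, then whenever $|A'_y|\ge 2$ the team $A'_y$ still fails to form, because no other member of $A'_y$ plays $A'_y$ (each plays its own singleton); hence $y$ remains a singleton and every other company's singleton is untouched. The induced team formation---and with it every probability $Pr(m,k)$, every $E_m$, and in particular $r_y$---is therefore unchanged by the deviation, so $r_y(spf^0)\ge r_y(spf)$ (in fact with equality) for all unilateral deviations and all $y$. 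This is exactly the NE condition, which proves the theorem.

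If one instead wants a \emph{non-degenerate} equilibrium, I would run better-response dynamics on the finite set $SPF$ and argue that the induced improvement path terminates. Here the main obstacle appears: the payoff $r_y$ depends on the \emph{entire} team formation through the competitive win-probabilities $Pr(m,k)=Pr\!\left(P^m_k=\min\{P^{m'}_k:R_k\subseteq R_{m'}\}\right)$, so distinct teams impose externalities on one another, and a single deviation can reshape more than just the deviator's own coalition. These two features break any exact-potential identity of the form $r_y(A'_y,A_{-y})-r_y(A_y,A_{-y})=\Phi(A'_y,A_{-y})-\Phi(A_y,A_{-y})$, so the natural candidate $\Phi=\lambda\sum_m E_m$ is not an exact potential. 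I would therefore aim instead for an \emph{ordinal} potential or a lexicographic ranking of profiles that strictly increases along each improving deviation, exploiting the monotonicity that enlarging a team (weakly) lowers its offer range $[\min_{y\in Y_m}o_y(s),\max_{y\in Y_m}o_y(s)]$ and hence (weakly) raises its win-probabilities, in order to bound the length of any improvement path and conclude termination at a pure NE.
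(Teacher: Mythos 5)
Your proof is correct, but it takes a genuinely different route from the paper. The paper argues that the follower game is an \emph{ordinal potential game} with potential $\Phi(spf)=\sum_{m\in M^{spf}_K}E_m[spf]$, claims that the sign of any unilateral payoff change matches the sign of $\Delta\Phi$, and then invokes the finite improvement property to conclude that improvement paths terminate at a pure NE. You instead exhibit an equilibrium directly: under the consensus rule $M^{spf}_K=\{m\mid \forall y,y'\in Y_m,\ A_y=A_{y'}\}$, no unilateral deviation from the all-singleton profile $spf^0$ can alter the realised partition (a deviator's new strategy is matched by no one, so it stays a singleton and every other team is untouched), hence every deviation leaves all win probabilities, all $E_m$, and in particular $r_y$ unchanged, and the weak-inequality NE condition of Definition~2 holds with equality. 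This is more elementary and, as a matter of rigour, on firmer ground than the paper's argument: the paper's decomposition of $\Delta\Phi$ into contributions from only the old and new team rests on the assertion that ``other teams not involving $y$ remain unaffected,'' which ignores precisely the externality you flag --- when $y$'s move changes which teams can cover a task $k$, the competitive probabilities $Pr(m,k)$ of third-party teams change too, so the two-term bookkeeping is incomplete and both implication directions are asserted rather than derived. What the paper's approach would buy, if completed, is convergence of better-response dynamics to possibly non-degenerate equilibria (which is what the experiments implicitly rely on); what your approach buys is a short, airtight existence proof for the theorem as stated, at the cost of producing only the degenerate no-cooperation equilibrium. Your final paragraph correctly diagnoses the obstacle to the potential-function route but leaves it as a sketch; it is not needed for the theorem.
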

\end{tcolorbox}

\begin{proof}
We prove the existence of a pure strategy Nash equilibrium (PSNE) by showing that this follower game is an ordinal potential game \cite{voorneveld1997characterization,ewerhart2020ordinal}. Specifically, we construct an ordinal potential function and demonstrate that for any unilateral strategy change by a company, the direction of change in that company’s payoff matches the direction of change in the potential function. Let $spf$ be a strategy profile, and let $M^{spf}_K$ denote the resulting team formation. Define the \emph{potential function} $\Phi(spf)$ as the sum of the expected profits of all teams under $spf$:
\[
\Phi(spf) \;=\; \sum_{m \in M^{spf}_K} E_m[spf],
\]
where
\[
E_m[spf] \;=\; \sum_{\substack{k \in K \\ R_k \subseteq R_m}}
    \Bigl( \Pr(m,k) \cdot \sum_{s \in R_k} \textit{Pay}_s^k \Bigr)
\]
is the expected profit of team $m$ under the strategy profile $spf$. Here, $\Pr(m,k)$ is the probability that team $m$ wins task $k$, and $\textit{Pay}_s^k$ is the payment received for providing service $s$ as part of task $k$. We assume each company’s cost is either fixed or does not affect the direction of any unilateral payoff change, so the relevant part for potential-function analysis is the team’s expected profit.

We need to prove that for any company $y \in Y$ and any unilateral strategy change from $spf$ to $spf'$, the following equivalence holds:
\small{\[
r_y(spf') \;-\; r_y(spf) \;>\; 0
\quad \Longleftrightarrow \quad
\Phi(spf') \;-\; \Phi(spf) \;>\; 0,
\]}
where $r_y(spf)$ denotes the payoff of company $y$ under strategy profile $spf$.
\begin{itemize}
    \item Let $m_y$ be the \emph{old team} of company $y$ under $spf$.  
    \item Let $m'_y$ be the \emph{new team} of company $y$ under $spf'$ (i.e., the team $y$ joins in the new strategy).
\end{itemize}
  
When $y$ unilaterally deviates from $spf$ to $spf'$, it potentially leaves $m_y$ and joins $m'_y$. Other teams not involving $y$ remain unaffected. For convenience (as is common in potential-game proofs), we set $\lambda = 1$ or treat $\lambda > 0$ as a positive constant that does not affect the inequality’s direction. Then, the company $y$’s payoff change simplifies to

\[
\begin{aligned}
\Delta r_y
\;=\; &
r_y(spf') \;-\; r_y(spf) \\
\;=\; &
\biggl(\frac{E_{m'_y}[spf']}{\lvert Y_{m'_y}\rvert}\biggr)
\;-\;
\biggl(\frac{E_{m_y}[spf]}{\lvert Y_{m_y}\rvert}\biggr)
\end{aligned}
\]

Only the old team $m_y$ and the new team $m'_y$ may have different expected profits under $spf'$ versus $spf$. Denote:

- $E_{m_y}[spf]$: the expected profit of the old team $m_y$ (including $y$) under $spf$.  
- $E_{m'_y}[spf']$: the expected profit of the new team $m'_y$ (including $y$) under $spf'$.  
- $E_{m'_y}^{-y}(spf)$: the expected profit of team $m'_y$ \emph{before} $y$ joins (i.e., under $spf$, when $m'_y$ did not include $y$).  
- $E_{m_y}^{+y}(spf')$: the expected profit of team $m_y$ \emph{after} $y$ leaves (i.e., under $spf'$, when $m_y$ no longer has $y$).

Thus, the change in the potential function is:

\[
\begin{aligned}
\Delta \Phi
&\;=\;
\Phi(spf') \;-\; \Phi(spf) \\
&\;=\;
\Bigl(E_{m'_y}[spf'] - E_{m'_y}^{-y}(spf)\Bigr) \\
&\quad+\;
\Bigl(E_{m_y}^{+y}(spf') - E_{m_y}[spf]\Bigr)
\end{aligned}
\]

\textbf{Direction 1:} $\Delta r_y > 0 \;\Longrightarrow\; \Delta \Phi > 0$.

Suppose $\Delta r_y > 0$. Then
\[
\frac{E_{m'_y}[spf']}{\lvert Y_{m'_y}\rvert}
\;>\;
\frac{E_{m_y}[spf]}{\lvert Y_{m_y}\rvert}
\]
Since $y$’s joining $m'_y$ improves its average payoff, it generally indicates that $m'_y$’s expected profit increases when $y$ joins:
\[
E_{m'_y}[spf'] \;>\; E_{m'_y}^{-y}(spf)
\]
Meanwhile, the old team $m_y$ may lose some capability or resource after $y$ leaves, implying
\[
E_{m_y}^{+y}(spf') \;\le\; E_{m_y}[spf]
\]
For $\Delta r_y$ to be strictly positive, the net effect of these changes on total expected profits must also be positive:
\[
\begin{aligned}
\Delta \Phi 
&\;=\; 
\Bigl(E_{m'_y}[spf'] - E_{m'_y}^{-y}(spf)\Bigr) \\
&\quad + 
\Bigl(E_{m_y}^{+y}(spf') - E_{m_y}[spf]\Bigr) \\
&\;>\; 0
\end{aligned}
\]
Hence, $\Delta \Phi > 0$.

\medskip
\noindent
\textbf{Direction 2:} $\Delta \Phi > 0 \;\Longrightarrow\; \Delta r_y > 0$.

Conversely, if $\Delta \Phi > 0$, it means the total expected profit of $m'_y$ and $m_y$ combined increases after $y$’s move from $spf$ to $spf'$. Thus, $m'_y$’s expected profit relative to its size increases sufficiently so that
\[
\frac{E_{m'_y}[spf']}{\lvert Y_{m'_y}\rvert}
\;>\;
\frac{E_{m_y}[spf]}{\lvert Y_{m_y}\rvert}
\]
Therefore, $y$’s own payoff must also increase, i.e., $\Delta r_y > 0$. 

Since each unilateral “profitable” deviation strictly increases $\Phi(spf)$, and the strategy sets are finite, $\Phi(spf)$ must be bounded. Hence, the game possesses the finite improvement property (FIP): any sequence of unilateral profitable deviations terminates in finitely many steps. The terminal outcome, where no player can unilaterally deviate to increase payoff, is by definition a pure strategy Nash equilibrium. Thus, we conclude that this follower game has at least one pure strategy Nash equilibrium.
\end{proof}

Although Theorem~\ref{NE} guarantees the existence of at least one pure strategy Nash equilibrium (PSNE) in the follower game, in a general ordinal potential game, there may be multiple PSNEs. To ensure \emph{uniqueness}, we introduce an additional strict monotonicity assumption on the potential function \(\Phi\), imposing a stronger structural constraint.

\begin{assumption}[Strict Monotonicity]\label{ass:singlepeaked}
On the finite strategy space, the potential function \(\Phi(spf)\) has a unique global maximum. That is, there exists exactly one strategy profile \(spf^*\) such that
\[
\Phi(spf^*) > \Phi(spf)
\quad
\forall\, spf \neq spf^*,
\]
and there is no other \(spf' \neq spf^*\) for which \(\Phi(spf') = \Phi(spf^*)\). 
This property is referred to as strict monotonicity, ruling out ties for the highest potential value.
\end{assumption}

\begin{corollary}[Uniqueness of PSNE]\label{cor:uniqueness}
Under the same conditions of Theorem~\ref{NE}, if the potential function \(\Phi\) further satisfies the Assumption~\ref{ass:singlepeaked}, then there is a \textbf{unique} pure strategy Nash equilibrium in the follower game.
\end{corollary}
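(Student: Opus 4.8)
The plan is to exploit the ordinal-potential structure already established in Theorem~\ref{NE} and reduce the question of equilibrium uniqueness to a purely combinatorial question about the landscape of $\Phi$ over the finite strategy space $SPF$. The central fact I would establish first is the \emph{exact} correspondence between pure strategy Nash equilibria and the local maxima of $\Phi$ under unilateral deviation. Concretely, a profile $spf^*$ is a PSNE precisely when no company $y$ has a profitable unilateral deviation, i.e. when $r_y(spf) \le r_y(spf^*)$ for every single-coordinate change $spf = \{A_y, A^*_{-y}\}$; by the ordinal-potential equivalence proved in Theorem~\ref{NE} (namely $\Delta r_y > 0 \Leftrightarrow \Delta \Phi > 0$ for every player and every unilateral move), this holds if and only if $\Phi(spf) \le \Phi(spf^*)$ for every such deviation. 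Hence the set of PSNE coincides \emph{exactly} with the set of local maximizers of $\Phi$ with respect to the unilateral-deviation neighbourhood. Proving this double implication is routine given Theorem~\ref{NE}, and it converts the whole problem into a single question: how many local maxima does $\Phi$ admit?

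With the correspondence in hand, the remaining steps are short. First, the unique global maximizer $spf^*$ guaranteed by Assumption~\ref{ass:singlepeaked} is \emph{a fortiori} a local maximizer, so it is a PSNE; this re-derives existence in the present special case and pins down the candidate equilibrium. Second, for uniqueness I would argue by contradiction: suppose a second equilibrium $\widehat{spf} \neq spf^*$ exists. By the correspondence, $\widehat{spf}$ is then a local maximizer of $\Phi$, meaning no unilateral deviation from $\widehat{spf}$ strictly increases the potential. The aim is to contradict Assumption~\ref{ass:singlepeaked} and force $\widehat{spf} = spf^*$.

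Pinning down that last step is where the real difficulty lies, and it is worth stating plainly. The assumption as literally written only forbids a \emph{tie} for the top value of $\Phi$; on its own, ``unique global maximum'' does not preclude a strictly lower \emph{secondary} local peak, and through the correspondence above any such secondary peak would be a genuine second PSNE. So the verbatim statement is not enough to eliminate $\widehat{spf}$. To close the gap I would use Assumption~\ref{ass:singlepeaked} in the stronger single-peaked sense its label suggests: that $\Phi$ possesses \emph{no} local maximum other than the global one. The cleanest direct route is to combine this with the finite improvement property from Theorem~\ref{NE}: since $\Phi(\widehat{spf}) < \Phi(spf^*)$, single-peakedness supplies a unilateral deviation from $\widehat{spf}$ along which $\Phi$ strictly increases toward the global peak, directly contradicting that $\widehat{spf}$ is a local maximizer. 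Consequently, the honest obstacle is not the contradiction mechanics but justifying this ``strictly improving path'' property — equivalently, that the graph of profitable unilateral moves is connected and monotone toward $spf^*$. I would either promote single-peakedness to an explicit hypothesis or verify it directly from the profit structure $E_m[\cdot]$ of the follower game, showing the $\Phi$-landscape admits no secondary peaks. That verification, rather than the equilibrium bookkeeping, is the part I expect to be hard and the part that genuinely earns the uniqueness claim.
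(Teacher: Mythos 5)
Your proposal follows the same route as the paper's own proof: both rest on the correspondence, inherited from the ordinal-potential equivalence $\Delta r_y > 0 \Leftrightarrow \Delta\Phi > 0$ established in Theorem~\ref{NE}, between pure strategy Nash equilibria and local maxima of $\Phi$ under unilateral deviations, and both then invoke Assumption~\ref{ass:singlepeaked} to collapse the set of such maxima to a single point. The difference is that you carry the argument out honestly, and the paper does not. The paper's proof considers two putative equilibria $spf^1 \neq spf^2$ but then only treats the case $\Phi(spf^1) = \Phi(spf^2)$, declaring a contradiction with the no-ties clause; it never addresses the case $\Phi(spf^1) \neq \Phi(spf^2)$, i.e.\ a strictly lower secondary local peak of $\Phi$. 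Since equilibria of an ordinal potential game correspond to \emph{local} maximizers, not global ones, that second case is exactly the one that matters, and Assumption~\ref{ass:singlepeaked} as written says nothing about it. You name this gap explicitly: ``unique global maximum with no ties'' does not preclude secondary peaks, and through the correspondence each secondary peak is a genuine second PSNE. The paper's proof silently equates ``PSNE'' with ``global maximizer of $\Phi$,'' which is precisely the unjustified step.

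Your proposed repair --- reading (or restating) the assumption as genuine single-peakedness, meaning $\Phi$ admits no local maximizer other than its global one, so that from any non-optimal profile some unilateral deviation strictly increases $\Phi$ --- is exactly what the corollary needs in order to be true, and your contradiction argument then closes cleanly. So your proposal is not a different method; it is the same method done correctly, and in the process it exposes a real flaw in the paper's own argument. The one caveat on your side is the suggestion that single-peakedness might be verified directly from the profit structure $E_m[\cdot]$: that is unlikely to hold in general for this follower game, so promoting single-peakedness to an explicit hypothesis (as you also propose) is the right resolution, and is presumably what the authors intended Assumption~\ref{ass:singlepeaked} to mean.
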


\begin{proof}[Proof of Uniqueness]
By Theorem~\ref{NE}, the game admits at least one PSNE. Suppose there are two different pure strategy Nash equilibrium \(spf^1 \neq spf^2\). In an ordinal potential game, each PSNE corresponds to a local maximum of \(\Phi\), meaning that no unilateral deviation can raise the potential any further at either \(spf^1\) or \(spf^2\).

If \(\Phi(spf^1) = \Phi(spf^2)\) yet \(spf^1 \neq spf^2\), then they share the same potential value but are distinct strategy profiles, contradicting Assumption~\ref{ass:singlepeaked}. That assumption rules out two different global maxima having the same \(\Phi\)-value. Hence, \(\Phi\) can only have one global maximiser \(spf^*\).

Because a global maximiser in an ordinal potential game is a pure strategy Nash equilibrium (no single-player deviation can improve the potential), and no other strategy profile can tie with \(\Phi(spf^*)\), the PSNE must be \emph{unique}.
\end{proof}

\subsection{Proof for Theorem 2}
\begin{tcolorbox}[colframe=black, colback=white, boxsep=0pt, left=2pt, right=2pt, top=2pt, bottom=2pt]
\begin{theorem}
At least one Stackelberg Equilibrium for our Customer-led Stackelberg Game Model.
\end{theorem}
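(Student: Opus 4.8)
The plan is to reduce the existence of a Stackelberg Equilibrium to two equilibrium claims that can be chained by backward induction over finite strategy spaces. First I would note that $SPL=\mathcal A_{x_1}\times\cdots\times\mathcal A_{x_{|X|}}$ and $SPF=\mathcal A_{y_1}\times\cdots\times\mathcal A_{y_{|Y|}}$ are both finite, since $X$ and $Y$ are finite and every strategy is a subset of $X$ (resp.\ $Y$). The follower half of Definition~5 is then delivered directly by Theorem~\ref{NE}: for every leader profile $spl\in SPL$ the induced tasks $K^{spl}$ are feasible, so the follower game on $K^{spl}$ admits at least one pure-strategy Nash equilibrium. Fixing a selection $spf^*(spl)$ of such an equilibrium for each $spl$ gives a well-defined follower response map and secures the second bullet of the SE definition for whichever leader profile is ultimately chosen.

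The second, and main, step is to show that the induced leader game, in which each customer $x$ minimises its anticipated utility $u_x\big(spl,spf^*(spl)\big)$, possesses a pure-strategy equilibrium. I would attack this by mirroring the argument of Theorem~\ref{NE}: construct an ordinal potential $\Psi(spl)$ for the leaders and show that any unilateral deviation that strictly lowers a customer's payment strictly decreases $\Psi$. The natural candidate aggregates the total discounted payment $\sum_{k\in K^{spl}}\sum_{s\in R_k}Pay^k_s$ together with a penalty term accounting for the completion indicator in Equation~\eqref{utility_leader}. Because the discount factor $\delta$ is strictly monotonically decreasing, enlarging the set of customers who share a service lowers everyone's payment for that service, which is exactly the positive-externality structure that makes congestion-type games potential games; finiteness of $SPL$ then yields the finite improvement property and hence a pure-strategy leader equilibrium $spl^*$.

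Finally I would combine the two layers: set $spf^*=spf^*(spl^*)$ and verify both bullets of Definition~5. The follower bullet holds by the choice of $spf^*$ as a follower Nash equilibrium on $K^{spl^*}$, and the leader bullet holds because $spl^*$ is an equilibrium of the leader game, so no customer can reduce $u_x$ by a unilateral change while the follower response is held at $spf^*$.

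The hard part will be the construction and verification of the leader potential $\Psi$. Two features of the model obstruct a clean exact potential: the completion penalty $\xi\gg f^k_x$ makes $u_x$ discontinuous whenever a deviation turns an otherwise feasible task into one that no team $m$ with $R_k\subseteq R_m$ can win, so I must show these infeasibility transitions are consistent with the direction of $\Psi$ rather than creating improvement cycles; and the heterogeneous willingness-to-pay $p_x(s)$ prevents a standard Rosenthal-style summation from tracking payoff differences exactly, so I expect to settle for an ordinal rather than exact potential, possibly under an auxiliary monotonicity assumption on $\delta$, paralleling Assumption~\ref{ass:singlepeaked} and Corollary~\ref{cor:uniqueness} for the uniqueness refinement. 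I would also need to reconcile the fixed-follower form of the SE definition (where $spf^*$ is held constant under a leader deviation) with the backward-induction argument (where the follower response is re-optimised inside each leader step), confirming that the equilibrium pair $(spl^*,spf^*)$ is in fact a joint fixed point and that the dependence of $spf^*(spl)$ on the leader profile does not re-introduce cycles.
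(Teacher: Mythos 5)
Your outer skeleton matches the paper's proof of Theorem~\ref{SE}: both arguments fix, for every leader profile $spl \in SPL$, a pure follower Nash equilibrium $\mathrm{BR}(spl)$ guaranteed by Theorem~\ref{NE}, pass to the induced finite leader game with anticipated utilities $\tilde u_x(spl) = u_x\bigl(spl, \mathrm{BR}(spl)\bigr)$, and then combine the two layers exactly as in your final step. The divergence---and the gap---is in the middle step, which is the actual crux of the theorem. You propose to prove existence of a pure leader equilibrium by building an ordinal potential $\Psi$ for the leader game, but you never construct or verify it; you yourself list the obstructions (the discontinuous completion penalty $\xi$, the player-specific prices $p_x(s)$, the dependence of $\mathrm{BR}(spl)$ on $spl$) and defer them. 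These obstructions are not cosmetic. Because each customer's payment $\delta(d^k_s)\,p_x(s)$ carries a player-specific weight, the leader game is a congestion-type game with player-specific, non-anonymous payoffs, and such games in general admit no (ordinal) potential and can contain improvement cycles; Rosenthal-style potentials require anonymous resource costs, and monotonicity of $\delta$ (the ``positive externality'' you invoke) does not substitute for that. Moreover, the completion indicator in each $\tilde u_x$ is determined by re-solving the follower game after every deviation, so even an exact potential for the payment part would not control the $\xi$-jumps. As written, the central existence claim is therefore conjectured, not proven.

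The paper closes this step with a different and much shorter device: since $SPL$ is finite, the leader game $\{\tilde u_x\}$ has a (possibly mixed) Nash equilibrium by the classical Nash existence theorem; then, because $\xi \gg \max_k f^k_x$, any mixed strategy placing positive probability on an action that causes task failure is strictly dominated by a pure strategy guaranteeing completion, so an equilibrium can be selected in pure strategies, and $(spl^*, \mathrm{BR}(spl^*))$ is the desired SE. If you want to rescue your route, you must either actually exhibit a valid ordinal potential (which I doubt exists without additional structural assumptions, for the reasons above) or replace that step with a finiteness-plus-dominance argument of the paper's kind. One further remark: the reconciliation issue you flag at the end---that the SE definition holds $spf^*$ fixed under a leader deviation, while backward induction re-optimises the follower response inside each leader step---is a genuine subtlety, but it is equally present in the paper's own proof; resolving it is orthogonal to, and not achieved by, your potential-function construction.
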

\end{tcolorbox}

\begin{proof}
For each $spl \in SPL$, the follower game defined by $K^{spl}$ has at least one pure-strategy Nash equilibrium. Denote by
\[
  \mathrm{BR}(spl)\;\in\;SPF
\]
The chosen pure NE of the companies (followers) is given $spl$. Thus, $\mathrm{BR}(spl)$ is the followers' best response to the leader strategy $spl$. Once $\mathrm{BR}(spl)$ is fixed for any $spl$, each customer $x$'s cost function becomes
\[
  \tilde{u}_x(spl)
  \;=\;
  u_x\bigl(spl,\;\mathrm{BR}(spl)\bigr).
\]
Because each $\mathcal{A}_x$ is finite and $x$'s cost is well-defined for all $spl$, the leaders (customers) face a finite normal-form game in $SPL$ with cost functions $\{\tilde{u}_x\}$. By the classic Nash Existence Theorem for finite games, there is at least one (possibly mixed) Nash equilibrium in the leaders' game.

Suppose there were a mixed strategy $\sigma_x$ for some customer $x$ that assigns positive probability to an action causing the task to fail. Because $\xi \gg \max_k f^k_x$, incurring $\xi$ in any portion of the mixture increases the expected cost above that of a pure strategy guaranteeing success. Hence, such a mixed strategy is strictly dominated by a pure strategy that avoids task failure. According to this argument, no customer will mix over failing actions in equilibrium. Therefore, the leaders' Nash equilibrium can be chosen in pure strategies; denote it by $spl^*$. Let $spf^* = \mathrm{BR}(spl^*)$ be the followers' pure-strategy best response to $spl^*$. Then:

\begin{itemize}
  \item \emph{Follower optimality}: $spf^*$ is a pure NE given $K^{spl^*}$, so no company can deviate to improve its payoff. 
  \item \emph{Leader optimality}: $spl^*$ is a pure NE in the leaders' finite game (with a large penalty), so no customer can deviate from lowering its cost.
\end{itemize}
Hence, $(spl^*, spf^*)$ constitutes a pure-strategy Stackelberg Equilibrium in the customer-led setting.
\end{proof}

\begin{corollary}[Uniqueness of SE under Strict Monotonicity]
\label{cor:unique_speaked}
In the customer-led Stackelberg game described above, suppose we define the \emph{global cost} as
\[
  \Phi(spl) 
  \;=\;
  \sum_{x \in X} \;\tilde{u}_x(spl)
  \;=\;
  \sum_{x \in X} \;u_x\bigl(spl,\,\mathrm{BR}(spl)\bigr).
\]
Assume that $\Phi(\cdot)$ satisfies the strict monotonicity property on the finite set $SPL$, meaning there is a unique global minimiser 
\[
  spl^\dagger \quad \text{such that}\quad 
  \Phi(spl^\dagger) < \Phi(spl)\quad \forall\,spl\neq spl^\dagger,
\]
with \emph{no ties} or other points achieving $\Phi(spl^\dagger)$. Then there is \textbf{exactly one} pure-strategy Stackelberg Equilibrium in this model.
\end{corollary}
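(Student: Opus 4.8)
The plan is to mirror the uniqueness argument already used for the follower game in Corollary~\ref{cor:uniqueness}, now transplanted to the \emph{reduced} leader game. First I would fix the follower best-response selection $\mathrm{BR}(\cdot)$ guaranteed by Theorem~\ref{SE}, so that the customers face the finite normal-form game on $SPL$ with cost functions $\tilde{u}_x(spl) = u_x(spl,\mathrm{BR}(spl))$. As in the proof of Theorem~\ref{SE}, the penalty $\xi \gg f^k_x$ makes any action causing task failure strictly dominated, so every equilibrium of this leader game is pure and attains only finite costs of the $f^k_x$ form; this lets me restrict attention to pure leader profiles and identify a pure-strategy SE with its leader component $spl^*$.

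The core step is to treat $\Phi(spl) = \sum_{x \in X} \tilde{u}_x(spl)$ as an ordinal potential for the reduced leader game, exactly as $\Phi(spf)$ served the follower game in Theorem~\ref{NE}. Concretely, I would show that whenever a single customer $x$ can unilaterally lower its own cost $\tilde{u}_x$ by re-choosing $A_x$, the global cost $\Phi$ strictly decreases, and conversely. Granting this, each pure-strategy SE leader component is a local minimiser of $\Phi$ over the finite set $SPL$, since the leader-optimality clause in the definition of SE says precisely that no customer deviation lowers the cost. Now I invoke the strict-monotonicity hypothesis: $\Phi$ has the unique global minimiser $spl^\dagger$ with no ties. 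Following Corollary~\ref{cor:uniqueness}, suppose two distinct pure-strategy SE had leader components $spl^1 \neq spl^2$; both would be local minimisers of $\Phi$, and uniqueness of the minimiser forces $\Phi(spl^1) = \Phi(spl^2) = \Phi(spl^\dagger)$, contradicting the ``no ties'' clause unless $spl^1 = spl^2 = spl^\dagger$. Hence the leader component is unique. Finally, since $\mathrm{BR}$ is a fixed best-response selection, $spf^* = \mathrm{BR}(spl^\dagger)$ is then determined, and Proposition~\ref{th6} additionally guarantees that any two follower responses at SE are payoff-equivalent for both sides, so the resulting equilibrium pair is unique (up to this payoff-equivalence, and exactly unique once $\mathrm{BR}$ is fixed).

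The main obstacle is the potential-game claim for the leader game. Unlike the follower side, a customer's deviation changes the task partition $K^{spl}$ and therefore the counts $d^k_s$, which alter the discount factors $\delta(d^k_s)$ and the completion status of tasks shared with \emph{other} customers; consequently $\tilde{u}_{x'}$ for $x' \neq x$ need not stay fixed, so $\Phi = \sum_x \tilde{u}_x$ is not automatically even an ordinal potential. Establishing the sign-alignment between $\Delta\tilde{u}_x$ and $\Delta\Phi$ despite this coupling --- or pinning down the structural conditions on $\delta(\cdot)$ and $\xi$ under which it holds --- is the delicate part, and is essentially what the strict-monotonicity assumption is implicitly buying. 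A secondary, lesser obstacle is the familiar gap that a local minimiser of an ordinal potential need not be the global one; here I would lean, as Corollary~\ref{cor:uniqueness} does, on reading \emph{strict monotonicity} (single-peakedness) as asserting that $\Phi$ has a single valley, so that its only local minimum coincides with $spl^\dagger$.
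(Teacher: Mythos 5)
Your proposal takes essentially the same route as the paper's own proof: fix the best-response selection $\mathrm{BR}(\cdot)$, reduce to the finite leader game with costs $\tilde{u}_x$, observe that the leader component of any pure-strategy SE is stable under unilateral customer deviations, and invoke the unique, tie-free global minimiser $spl^\dagger$ of $\Phi$ to force any two SE leader components to coincide with $spl^\dagger$, after which $spf^* = \mathrm{BR}(spl^\dagger)$ determines the follower side. The two obstacles you flag are not defects of your write-up relative to the paper's: the claim that $\Phi=\sum_{x}\tilde{u}_x$ behaves as an ordinal potential for the leader game (so that Nash stability of $spl^*$ makes it a local minimiser of $\Phi$, despite a deviating customer also changing other customers' discounts $\delta(d^k_s)$ and task-completion status), and the passage from local minimiser to global minimiser, are precisely the steps the paper's proof asserts without justification, so your more explicit treatment follows the paper's argument while being more candid about where it leans on the strict-monotonicity assumption.
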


\begin{proof}
We already know the game has at least one pure-strategy Stackelberg Equilibrium $(spl^*, spf^*)$ (from Theorem \ref{SE}). Concretely,
\begin{center}$spf^* = \mathrm{BR}(spl^*)$\end{center}
and $spl^*$ is a pure NE among the leaders once $\mathrm{BR}(\cdot)$ is fixed. Suppose, for contradiction, that there is a second distinct pure-strategy SE $(\widehat{spl}, \widehat{spf})$ with $\widehat{spl}\neq spl^*$. By definition, 
\begin{center}$\widehat{spf} = \mathrm{BR}(\widehat{spl})$\end{center}
and $\widehat{spl}$ is also a pure NE in the leaders' finite game. Define $\Phi(spl) = \sum_{x\in X} u_x\bigl(spl,\mathrm{BR}(spl)\bigr)$. By hypothesis, $\Phi(spl)$ has a \emph{unique} global minimiser $spl^\dagger \in SPL$. That is,
\[
  \Phi(spl^\dagger) < \Phi(spl)
  \quad
  \forall\,spl\neq spl^\dagger,
\]
and there are no ties at the same cost level $\Phi(spl^\dagger)$. Since $spl^*$ is a stable solution among the leaders, no leader $x$ can deviate from lowering its cost, implying $spl^*$ must be at least a \emph{local} minimiser with respect to unilateral deviations. However, in a \emph{strict monotonicity} environment, a local minimum that achieves the minimal possible $\Phi$-value must be \emph{the} global minimiser if it is stable. Hence
\[
  \Phi(spl^*) \;=\; \min_{spl\in SPL}\;\Phi(spl),
\]
which, by uniqueness, forces
\[
  spl^* \;=\; spl^\dagger.
\]

Consider $\widehat{spl}\neq spl^*$; if it too is a stable pure NE for the leaders, it would likewise have to achieve the same globally minimal cost $\Phi(\widehat{spl})$, contradicting the uniqueness of $spl^\dagger$. Therefore, $\widehat{spl}$ cannot be equally optimal. As a result, $(\widehat{spl}, \widehat{spf})$ fails to be a stable solution under strict monotonicity. Thus there is no second distinct pure-strategy equilibrium. Consequently, \emph{the} pure-strategy Stackelberg Equilibrium $(spl^*, spf^*)$ we found is indeed \textbf{unique}.
\end{proof}

\subsection{Proof for Propositions}

\begin{proposition} Given feasible tasks $K$ and NE strategy profile $spf^*$, for a team $m$, if a company $y\in m$, and there exist any tasks $k$ such that $TA_{M^{spf^*}_K}(k)=m$ and $\exists s\in R_k\bigcap S_y$, then $y'\not\in Y_m$, where $y\in Y$ and $w(y',y)=1$.
\end{proposition}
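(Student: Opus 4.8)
The plan is to argue by contradiction using the ordinal-potential structure established in the proof of Theorem~\ref{NE}. Suppose, contrary to the claim, that at the Nash profile $spf^*$ there is a company $y' \in Y_m$ with $y' \neq y$ and $w(y',y)=1$. By the social-distance definition, $w(y',y)=\frac{|S_{y'}\cap S_y|}{|S_y|}=1$ is equivalent to $S_y \subseteq S_{y'}$, i.e. $y'$ already offers every service $y$ offers. The first step is to record the coverage consequence: since $y' \in Y_m$, we have $R_{Y_m\setminus\{y\}}=\bigcup_{z\in Y_m\setminus\{y\}}S_z = R_m$, so deleting the redundant member $y$ leaves the team's service coverage $R_m$, and hence the set of tasks it can perform (including $k$, which it wins), unchanged.

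The tool I would use is the potential function $\Phi(spf)=\sum_{m}E_m[spf]$. The key preliminary observation is that each serveable task is won by exactly one eligible team, so $\sum_{m':R_k\subseteq R_{m'}}Pr(m',k)=1$ and therefore $\Phi(spf)=\sum_{k\text{ serveable}}\sum_{s\in R_k}Pay^k_s$; that is, $\Phi$ depends only on which tasks some team can cover. I would then consider the unilateral deviation in which $y$ switches to the solo strategy $A_y=\{y\}$, splitting the old team into $\{y\}$ and the residual team $m'=Y_m\setminus\{y\}$ while leaving all other teams fixed. Because $R_{m'}=R_m$ and $\{y\}$ can only cover tasks already coverable by $m'$, the family of serveable tasks, and hence $\Phi$, is invariant under this move.

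From here the intended contradiction is to turn the redundancy of $y$ and the equal-sharing rule $r_y=\lambda\,E_m/|Y_m|-\sum_{s\in S_y}c_y(s)$ into a strictly profitable deviation for $y$. The plan is to show that $y$'s marginal contribution to the coverage generating $E_m$ is zero, so the share $\lambda E_m/|Y_m|$ it draws is sustained purely by the non-redundant members; I would then use this slack to construct a deviation that strictly raises $y$'s payoff — either by re-deploying $y$ to cover a task not currently served (strictly increasing $\Phi$, hence $r_y$ by the ordinal-potential equivalence $\Delta r_y>0 \Leftrightarrow \Delta\Phi>0$), or by exhibiting an alternative grouping in which $y$ captures an undiluted portion of the unchanged total profit. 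Any such strict improvement contradicts $spf^*$ being a Nash equilibrium and forces $y'\notin Y_m$.

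I expect the main obstacle to be exactly this last step. Under the equal-sharing rule the \emph{plain} solo deviation is only payoff-\emph{neutral}, not strictly improving: since it preserves the serveable set it yields $\Delta\Phi=0$, and a redundant member who simply leaves a profitable team forfeits its share of $E_m$ rather than gaining. The real work is therefore to produce a genuinely strict unilateral improvement from the redundancy, and to control the offer-price randomness, which can make the residual team's expected profit $E_{m'}$ differ from $E_m$ even though coverage is identical; reconciling this $\Phi$-invariance with a strict individual gain (or, failing that, appealing to a coalition-stability refinement of the equilibrium) is the delicate part of making the argument rigorous.
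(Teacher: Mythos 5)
Your proposal does not close, and the missing idea is precisely the one the paper's own proof uses: the profitable deviator is $y'$, the \emph{dominating} company (the one with $w(y',y)=1$, i.e.\ $S_y\subseteq S_{y'}$), not the dominated company $y$. In the paper's argument one supposes $y,y'\in Y_m$ at the NE profile $spf^*$, takes a task $k$ won by $m$ with $R_k=\{s\}$ and $s\in S_y\subseteq S_{y'}$, and lets $y'$ unilaterally split off as a singleton team $m'$ with $Y_{m'}=\{y'\}$. Since $y'$ covers $k$ on its own, its solo offer undercuts the team's offer ($o^{m'}_s<o^m_s$), so the singleton wins the task, and $y'$ then collects the payment for $k$ \emph{undivided} instead of the diluted share $\lambda E_m/|Y_m|$; this strictly increases $r_{y'}$, contradicting that $spf^*$ is an NE. Your deviation --- sending the redundant member $y$ solo --- is exactly the move that cannot work: $y$ may not even be able to cover $k$ alone (the hypothesis gives only $\exists s\in R_k\cap S_y$, not $R_k\subseteq S_y$), and, as you yourself observe, a redundant member who walks away merely forfeits its share, so the deviation is payoff-neutral at best. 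You recognized this obstacle but left it unresolved, and your fallback (a coalition-stability refinement) would change the solution concept and therefore prove a different statement than the proposition asserts.

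A secondary remark: your observation that $\Phi(spf)=\sum_{k\ \mathrm{serveable}}\sum_{s\in R_k} Pay^k_s$, because winning probabilities sum to one over eligible teams, is correct under the paper's definitions, but it is a dead end for this proposition --- it shows that any coverage-preserving unilateral move leaves $\Phi$ unchanged, which (if anything) exposes a tension with the ordinal-potential argument of Theorem~\ref{NE} rather than yielding a contradiction. The paper's proof never routes the argument through $\Phi$; it compares revenues directly for $y'$ (full payment as a singleton versus an equal share inside $m$), and that is what produces the strict inequality your approach is missing.
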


\begin{proof} We assume that for a given NE strategy profile $spf^*$ and its $(M^{spf^*}_K, TA_{M^{spf^*}_K})$, there is a team $m\in M_K$ such that $y,y'\in Y_m$ and $w(y',y)=1$, and a task $k$ such that $R_k=\{s\}$ and $s\in S_y$. Suppose a teams $M^{spf}_K$ indicate by $spf$, where $m',m''\in M^{spf}_K$, with $Y_{m'}=\{y'\}$ and $Y_{m''}=Y_m\setminus \{y'\}$. It follows that $o^{m'}_s<o^m_s$, leading to $TA_{M^{spf}_K}(k)=m'$ and $Pay^K_{y'}(M_K,TA_{M_K})>Pay^K_{y'}(M^{spf^*}_K, TA_{M^{spf^*}_K})$. Consequently, $r^K_{y'}(spf)>r^K_{y'}(spf^*)$, which indicates that $spf^*$ is not a Nash equilibrium.
\end{proof}
In the proposition above, we have proven that if two companies offer the same services and customers need them, these companies cannot be on the same team in the Nash equilibrium strategy. In the subsequent theorem, we demonstrate that no individual company can independently service that task if a task cannot be executed in a given task allocation.
\begin{proposition} Given any task allocation $(spl, spf)$, if \\$u_x(spl, spf)=0$, then $R_k\not\subseteq S_y$ for all $y\in Y$ and $k\in K^{spl}$.
\end{proposition}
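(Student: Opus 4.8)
The plan is first to fix the meaning of the hypothesis. Reading the utility in Equation~\eqref{utility_leader}, the condition $u_x(spl,spf)=0$ corresponds to the branch in which customer $x$'s task is \emph{not} completed: $x\in X_k$ for some $k\in K^{spl}$ with $TA_{M^{spf}_K}(k)=\emptyset$. This is the only reading under which the statement is meaningful, and it matches the preceding remark that ``a task cannot be executed''. Accordingly, I would read the conclusion as the assertion about this particular task $k$, namely that $R_k\not\subseteq S_y$ for every company $y\in Y$, and argue by contradiction.

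The first step is to translate ``$TA_{M^{spf}_K}(k)=\emptyset$'' into a statement about team coverage. From the definition $Pr(m,k)=Pr\bigl(P^m_k=\min\{P^{m'}_k:R_k\subseteq R_{m'}\}\bigr)$, only teams $m'$ satisfying $R_k\subseteq R_{m'}$ are ever in contention for task $k$. Hence task $k$ is left unassigned if and only if no team in the formation can supply all of its required services; that is, $TA_{M^{spf}_K}(k)=\emptyset$ holds exactly when $R_k\not\subseteq R_m$ for every $m\in M^{spf}_K$. I would establish this equivalence as a lemma-style first step before invoking it.

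The second step is the containment argument. Suppose, for contradiction, that some company $y\in Y$ has $R_k\subseteq S_y$. Because $M^{spf}_K$ is a feasible team formation, the companies partition into teams with $\bigcup_{m\in M^{spf}_K}Y_m=Y$, so $y$ belongs to exactly one team $m$. By definition $R_m=\bigcup_{y'\in Y_m}S_{y'}\supseteq S_y\supseteq R_k$, giving $R_k\subseteq R_m$. Thus team $m$ is capable of serving task $k$, contradicting the first step. Therefore no such $y$ exists, i.e.\ $R_k\not\subseteq S_y$ for all $y\in Y$.

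I expect the only real obstacle to be conceptual rather than computational: correctly pinning down what the empty allocation $TA_{M^{spf}_K}(k)=\emptyset$ encodes, and reconciling the hypothesis ``$u_x=0$'' with the $\xi$-penalty appearing in Equation~\eqref{utility_leader}. Once the equivalence ``unassigned $\iff$ uncovered by every team'' is in hand, the remainder is a single set-containment step resting entirely on the feasibility property $S_y\subseteq R_m$ for the team $m$ containing $y$. I would also note explicitly that the universal quantifier over $k$ in the statement is best read as the unique task determined by $x\in X_k$, since the failure of one customer's task carries no information about the coverage requirements of unrelated tasks.
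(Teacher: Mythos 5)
Your proof is correct and takes essentially the same route as the paper's: interpret $u_x(spl,spf)=0$ as the customer's task being left unallocated (equivalently, $R_k\not\subseteq R_m$ for every team $m\in M^{spf}_{K^{spl}}$), then derive a contradiction from $R_k\subseteq S_y$ by using feasibility of the team formation to place $y$ in some team $m$ with $R_k\subseteq S_y\subseteq R_m$, which would force the task to be allocated. Your additional care in pinning down the meaning of the $u_x=0$ hypothesis and restricting the quantifier over $k$ to the task containing $x$ merely makes explicit what the paper's brief argument leaves implicit.
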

The proof of proposition \ref{le1} is straightforward. The condition $u_x(spl, spf)=0$ implies that $R_k\not\subseteq R_m$ for any $m\in M^{spf}_K$. Assume a task $k\in K$ and a company $y\in Y$ such as $R_k\subseteq S_y$. Consequently, a team $m$, which includes $y\in Y_m$, must exist, leading to the conclusion that $R_k \subseteq R_m$. This results in $TA_{M^{spf}_{K^spl}}=m$.

\begin{proposition} Given a strategy profile of leader game $spl$ with associated tasks $K^{spl}$, if there exists $(spl,spf)$ and $(spl,spf')$, then $\sum_{y\in Y} r^{K^{spl}}_y(spf)=\sum_{y\in Y} r^{K^{spl}}_y(spf')$. In addition, if two task allocations $(spl,spf)$ and $(spl', spf')$ are both at SE, then $u_x(spl,spf)=u_x(spl', spf')$.
\end{proposition}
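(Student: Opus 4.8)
The plan is to handle the two assertions separately, using the potential $\Phi$ from Theorem~\ref{NE} as the bridge between company revenue and customer payment; throughout, fix the leader strategy and write $K=K^{spl}$, so that the task set and all payments $Pay^k_s$ are frozen. For the first assertion I would sum the revenue identity $r_y(spf)=\lambda\,E_m[spf]/|Y_m|-\sum_{s\in S_y}c_y(s)$ over all $y\in Y$, grouping companies by their team. Each team $m$ then contributes $\lambda\,E_m[spf]$, since the factor $1/|Y_m|$ is cancelled by the $|Y_m|$ members, giving
\[
\sum_{y\in Y} r^{K}_y(spf)=\lambda\sum_{m\in M^{spf}_K} E_m[spf]-\sum_{y\in Y}\sum_{s\in S_y}c_y(s).
\]
The cost term is a constant independent of the follower strategy, so it suffices to show $\Phi(spf)=\Phi(spf')$. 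Exchanging the order of summation rewrites $\Phi(spf)=\sum_{k}\big(\sum_{s\in R_k}Pay^k_s\big)\sum_{m:R_k\subseteq R_m}Pr(m,k)$, and since the winning probabilities of the teams eligible for a served task form a distribution summing to one, each completed task contributes exactly its fixed payment $\sum_{s\in R_k}Pay^k_s$. As the existence of the allocations $(spl,spf)$ and $(spl,spf')$ guarantees that every task of $K$ is served under either profile, $\Phi(spf)=\sum_{k\in K}\sum_{s\in R_k}Pay^k_s=\Phi(spf')$, and the total revenues coincide.

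For the second assertion I would first record the dual identity that, once all tasks are served, the aggregate customer payment equals the same potential, $\sum_{x}u_x=\sum_{k}\sum_{x\in X_k}f^k_x=\sum_{k}\sum_{s\in R_k}Pay^k_s=\Phi$, using that $\sum_{x\in X_k}f^k_x=\sum_{s\in R_k}Pay^k_s$ by the definition of $Pay^k_s$. I would then show that no task fails at an SE: because $\xi\gg f^k_x$, any customer trapped in an unserved task can strictly lower its cost by deviating to a configuration whose requirements are met by the frozen follower profile $spf^*$ (e.g.\ by aligning with an already-successful task), so a profile containing an unserved task violates leader optimality. With every task served, each customer's utility is its share $f^k_x$ of the task payment, and the aggregate customer payment at each SE equals $\Phi$ evaluated there.

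The main obstacle is passing from this aggregate invariance to the claimed per-customer equality across two genuinely different equilibria $(spl,spf)$ and $(spl',spf')$, since distinct leader strategies yield distinct task structures $K^{spl}\neq K^{spl'}$ and hence distinct shared demands $d^k_s$ and discounts $\delta(d^k_s)$. To close this gap I would mirror the follower analysis on the leader side: because $f^k_x$ is decreasing in the shared demand $d^k_s$, the leader game carries a congestion/coordination structure, and I would construct a leader-side potential whose global minimisers are exactly the payment-minimising task structures and argue that leader optimality forces every SE to attain this common minimal value; equal potential together with the payment decomposition would then force the per-customer utilities to agree. I expect the delicate point to be verifying that the SE condition---evaluated against a frozen $spf^*$---pins the leaders to the \emph{global} optimum of this potential rather than merely a local one, which is precisely where a hypothesis analogous to the strict-monotonicity Assumption~\ref{ass:singlepeaked} may have to be invoked.
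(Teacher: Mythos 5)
Your treatment of the first assertion is correct and is genuinely different from --- and considerably more self-contained than --- the paper's own proof. The paper disposes of the entire proposition in two sentences by citing Lemma 2.9 of \cite{roughgarden2002bad} (equilibrium-cost uniqueness in selfish routing), without checking that this finite team-formation game satisfies that lemma's hypotheses. You instead prove the revenue identity directly: summing $r_y$ over the members of each team cancels the $1/|Y_m|$ factor, the cost term $\sum_{y}\sum_{s\in S_y}c_y(s)$ is strategy-independent, and exchanging the order of summation in $\Phi$ together with the fact that $\sum_{m:\,R_k\subseteq R_m}\Pr(m,k)=1$ for every served task (ties having probability zero for continuous price distributions) shows that $\Phi(spf)$ equals the total payment of the tasks served under $spf$. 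This reduces the claim to both profiles serving the same set of tasks, which you obtain by reading ``there exists $(spl,spf)$'' as ``every task of $K^{spl}$ is allocated.'' That reading is in fact necessary --- without it the claim is false, since a task coverable only by a merged team contributes its payment under one profile and nothing under another --- so making it explicit is the right call, and this half of your argument is sound.

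The second assertion is where the genuine gap lies. Your step ``no task fails at an SE, because a trapped customer can deviate, e.g.\ by aligning with an already-successful task'' does not survive the task-formation rule: a task forms only when all of its members play \emph{identical} strategies, and every strategy must contain its owner ($x\in A_x$). For $x$ to join an existing successful task $k'$, $x$ would need $A_x$ equal to the common strategy $X_{k'}$ of the members of $k'$, which does not contain $x$; hence the only unilateral deviation effectively available to $x$ is the singleton task $\{x\}$, and nothing guarantees that $S_x$ is coverable by any team of the frozen profile $spf^*$. So ``every task is served at SE'' is not established. Moreover, even granting it, per-customer equality across two equilibria with $spl\neq spl'$ does not follow from aggregate equality: the demands $d^k_s$, hence the discounts $\delta(d^k_s)$ and each $f^k_x$, change with the task structure. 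You acknowledge this and propose an unconstructed leader-side potential, conceding it may require a hypothesis analogous to Assumption~\ref{ass:singlepeaked}, which the proposition does not grant; as written, the second half therefore remains unproven. For what it is worth, the paper's own proof of this half is nothing more than the citation to \cite{roughgarden2002bad} plus an assertion, so the difficulty you ran into is real --- but your proposal, like the paper, does not close it.
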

Theorems \ref{NE} and \ref{SE} show that a SE exists in our market model; therefore, this theorem can be proven using the Lemma 2.9 of \cite{roughgarden2002bad}. Proposition \ref{th6} proves that all strategy profiles are at the Nash equilibrium state for given tasks. Consequently, the follower's game yields the same social welfare, representing all companies' total utility. Furthermore, if more than one SE exists in the customers' game, each customer's utility remains the same.

\section{The Detailed Setting of Our Experiment}

Based on our research and analysis of public websites and databases, the current EO service market reveals that EO image prices range between $\$8$ and $\$12$ per square kilometre. Consequently, we adopted a standard rate of $\$10$ per square kilometre. Price variations are due to the differing sizes of administrative regions, as exemplified by New York City, where the standard price was calculated at $\$120,930$ based on its area of $12,093 km^2$. EO services are offered at three image resolutions: low ($<1.5m$), medium ($<1m$), and high ($<0.5m$), with costs set at 110\% of the standard price for medium resolution and 120\% for high resolution. Each satellite incurs an EO imaging cost of $\$3$ per square kilometre, with the baseline operational cost for each company set at $\$5000$ multiplied by the number of services ($|S_y|$) they offer. The formation cost is defined as $\lambda(\cdot)\in [0,0.3]$. Customers select between 1 and 10 cities and specify different image resolutions to meet their needs, with payments randomised between $1$ and $1.2$ times the standard price. The discount factor is modelled by the exponential function $\delta(d^k_s)=0.5e^{\left(-\left(\frac{|Y|}{|X|}\right)\cdot(d^k_s - 1)\right)}+0.4$, where $|Y|$ and $|X|$ represent the total service demands and the number of services offered, respectively, and $d^k_s$ signifies the demand for service $s$ in task $k$. The maximum discount is capped at $60\%$. We employ a clustering algorithm based on Jacobian distance (see Algorithm $1$ in Section $5.1$ below) to group customers by their service needs, addressing the extensive customer base. For team formation, companies within a social network are organised into teams using a greedy algorithm (see Algorithm $2$ in Section $5.2$ below) that forms teams based on a similarity threshold. This algorithm clusters companies by extending connections among those below this similarity threshold. Tasks generated through this method are allocated to the most competitively priced teams (see Algorithm $3$ in Section $5.3$ below). Subsequent subsections will adjust the number of customers and similarity thresholds to explore various experimental scenarios and analyse the results.

\begin{algorithm}
\caption{Customer and Companies Definition}\label{user_provider_generation}
\begin{algorithmic}[1]
\State \textbf{Input:} Customers $X$ and Companies $Y$
\State \textbf{Output:} Lists of Customers $X$ and Companies $Y$ with associated information
\State Read the city with geographical data from a CSV file in Basilisk
\State Initialise Positions $P$ list with city name ($p.\text{name}$), population ($p.\text{population}$), and area ($p.\text{area}$)
\For{$x \in |X|$}
    \State Generate a random number of needed services as list $positions$
    \State Random $x$ price factor $pf_x \in [1,1.2]$
    \State $x.\text{positions} \gets positions$
    \For{$i \in positions$}
        \State $x.\text{service\_resolution} \in \{low, medium, high\}$
        \State $x.\text{service\_cost}[i] \gets 10 \times position[i].\text{area} \times pf_x$
    \EndFor
    \State $x.\text{total\_payment} \gets \sum(x.\text{service\_cost})$
    \State Append $x$ to customers' list $X$
\EndFor
\For{$y \in |Y|$}
    \State Random generates a Low Earth Orbit (LEO) satellite with orbital parameters and calculates the available services $services$
    \State Random $y$ price factor $pf_y \in [1,1.3]$
    \State $y.\text{services} = services$
    \For{$i \in services$}
        \State $y.\text{service\_resolution} \subseteq \{low, medium, high\}$
        \State $y.\text{service\_cost}[i] \gets 3 \times position[i].\text{area} \times pf_y$
    \EndFor
    \State Operation cost $OC_y \gets |services| \times 5000$
    \State $y.\text{total\_cost} \gets OC_y + \sum(y.\text{service\_cost})$
    \State Append $y$ to companies' list $Y$
\EndFor
\State \Return Lists of Customers $X$ and Companies $Y$
\end{algorithmic}
\end{algorithm}

\subsection{Customer and Companies Definition}

Algorithm \ref{user_provider_generation} is designed to define lists of customers and companies, each with specific attributes. Initially, it reads geographical data from a CSV file to establish a list of potential locations, including city names, populations, and areas. The algorithm randomly generates the required services for each customer, assigns geographical positions, decides resolutions, calculates service prices, and aggregates total payments. These details are then compiled into a customer profile appended to the customer list. The algorithm uses the Basilisk simulation platform for companies to simulate Low Earth Orbit (LEO) satellites with defined orbital parameters, determines the available services, and computes operation and service costs. Each company profile, including these cost details, is added to a company list. The output of this algorithm comprises two lists: one for customers (denoted as $X$) and one for companies (denoted as $Y$), containing all the relevant information as defined in the main paper.

\section{Algorithms in Our Experiments}

In this section, we discuss the main algorithms featured in the article, which encompass a variety of computational strategies. Initially, we used the Basilisk simulation platform \cite{kenneally2020basilisk,wood2018flexible} to generate data for consumers and companies, providing comprehensive information for each entity. Subsequently, we employed a clustering algorithm to create tasks by grouping related activities based on shared characteristics. We also implemented a greedy algorithm for team formation, which optimised group assembly based on criteria such as skill levels and task requirements. Lastly, task allocation was managed using a minimum price strategy, assigning tasks to the lowest bidder. These methodologies facilitate efficient operations and strategic planning within the simulated environment. All experiments run on a MacBook Pro with an M1 Max chip and 32 GB Memory.      

\begin{algorithm}
\caption{Clustering Algorithm for Task Generation}\label{clustering}
\begin{algorithmic}[1]
\State \textbf{Input:} Customers $X$ and number of tasks $N$
\State \textbf{Output:} Set of tasks $K$, where $|K| = N$
\State Initialize $N$ centroids randomly from $X$
\While{centroids change}
    \For{each customer $x \in X$}
        \State Calculate Jacobian distance $d_J(x, k)$ for each centroid $k$
        \State $d_J(x, k)=1-\frac{|x.services\bigcap k.services|}{|x.services\bigcup k.services|}$
        \State Assign $x$ to the task with the nearest centroid
    \EndFor
    \For{each task $k \in K$}
        \State Update centroid of $k$ based on members
    \EndFor
\EndWhile
\State \Return Tasks $K$
\end{algorithmic}
\end{algorithm}

\subsection{Task Generation Algorithm}

Algorithm \ref{clustering} is a clustering-based approach specifically tailored for generating tasks by grouping customers. It takes two primary inputs: a set of customers $X$ and a desired number of tasks $N$, which are $200$, $400$, $600$, $800$ and $1000$ in our paper. Initially, $N$ centroids are randomly selected from $X$ to represent the starting point of each task. The algorithm proceeds iteratively; in each iteration, each customer $x\in X$ is assigned to a task based on the Jacobian distance, a measure of similarity especially relevant for datasets with binary attributes—between $x$ and each centroid. Customers are assigned to the nearest centroid, thus forming preliminary groups. After all customers have been assigned, each task's centroid is recalculated based on the current members of that task, which typically involves computing a statistical mean or median of the assigned customers' attributes. This process repeats until the centroids stabilise and no longer change, indicating that the tasks have been sufficiently optimised. The output is a set of $N$ distinct tasks, each characterised by a group of customers who are similar to each other according to the Jacobian distance metric. This algorithm organises customers into coherent groups for targeted task assignments in various applications such as marketing, service delivery optimisation, or resource allocation. After Algorithm \ref{clustering} generates a new task, each user calculates a new payment amount based on the discount function.

\begin{algorithm}
\caption{Team Formation based on Similarity}\label{greedy}
\begin{algorithmic}[1]
\State \textbf{Input:} A list of companies $Y$ and the similarity for team formation
\State \textbf{Output:} List of teams $M$
\State Initialise an empty list $M$
\State Initialise a boolean array $Allocated\_company$ with all values set to \textit{False}
\State $Two\_company\_Similarity(i, j)$
\State \Return $similarity = \frac{|S_i\bigcap S_j|}{|S_i|}$
\State Initialise $random\_list[|Y|]$ randomly from $Y$
\For{$i = 0$ \textbf{to} $length(random\_list) - 1$}
    \If{\textbf{not} $Allocated\_company[i]$}
        \State $new\_team \gets [random\_list[i]]$
        \State $Allocated\_company[i] \gets$ \textit{True}
        \For{$j = 0$ \textbf{to} $length(random\_list) - 1$}
            \If{\textbf{not} $Allocated\_company[j]$\\\textbf{and} $Two\_company\_Similarity(i, j) \leq similarity$}
                \State Append $random\_list[j]$ to $new\_team$
                \State $Allocated\_company[j] \gets$ \textit{True}
            \EndIf
        \EndFor
        \State Append $new\_team$ to $M$
    \EndIf
\EndFor
\State \Return $M$
\end{algorithmic}
\end{algorithm}

\subsection{Team Formation Algorithm}

In our experimental tests, we divided the similarity of the formed teams into 300 intervals from 0 to 0.6, and the algorithm looped from 0 to 0.6. Algorithm \ref{greedy} proposed a method for team formation of companies based on their similarities of services, that companies within the same team meet a predefined similarity threshold. Initially, a list of companies and a similarity as inputs. It starts by creating an empty list to store teams and a boolean array to track whether each company has already been allocated to a team. The list of companies is shuffled to randomise the team formation process. For each company in the shuffled list that has not been allocated, a new team is started with that company. The algorithm then checks all other unallocated companies to see if their similarity with the initial company meets the threshold using a predefined similarity function. A company qualifies and is added to the current team and marked as allocated. This process is repeated until all companies have been allocated to a team. The result is a list of teams, each consisting of companies similar to each other according to the specified threshold.

\begin{algorithm}\footnotesize
\caption{Task Allocation and Minimum Offer Selection}\label{alg:TaskAllocationAndMinOffer}
\begin{algorithmic}[1]
\State \textbf{Input:} List of Tasks $K$ and Teams $M$
\State \textbf{Output:} Allocation results $TA\_tasks$ and number of failed customers $Failed\_customers$
\Function{Find\_Min\_Offer}{$k$, $M$}
    \State $min\_offer=\infty$
    \State $allocated\_team=-1$
    \For{$i$ in range of $M$}
        \State $offer=0$
        \If{$k.positions \subseteq M[i].services$}
            \For{$j$ in range of $k.positions$}
                \For{$q$ in range of $M[i].services$}
                    \If{$k.positions[j]$=$M[i].services[q]$}
                        \State $offer+= M[i].service\_cost[q]$
                    \EndIf
                \EndFor
            \EndFor
            \If{$offer\leq min\_offer$}
                \State $min\_offer=offer$
                \State $allocated\_team=M[i].id$
            \EndIf
        \EndIf
    \EndFor
    \State \Return $min\_offer$, $allocated\_team$
\EndFunction
\Procedure{Task\_Allocation}{$K$, $M$}
    \State $TA\_tasks=[]$
    \State $Failed\_customers=0$
    \For{$k$ in $K$}
        \State $min\_offer, team \gets Find\_Min\_Offer(k, M)$
        \State $TA\_tasks[k]=team$
        \If{$TA\_tasks[k]=-1$}
            \State $Failed\_customers+=k.num\_customers$
        \EndIf
    \EndFor
    \State return $TA\_tasks$ and $Failed\_customers$
\EndProcedure
\end{algorithmic}
\end{algorithm}

\subsection{Task Allocation}

Algorithm \ref{alg:TaskAllocationAndMinOffer} proposed a method that matches tasks with teams based on service needs and costs. The $Find\_Min\_Offer$ function identifies the cheapest team for each user by comparing available offers and selecting the one with the lowest total cost. The $Task\_Allocation$ function oversees the assignment process, calculating total task payments and revenue generated while tracking the number of filed customers. In this section, we show a detailed explanation of the experimental results under five scenarios mentioned in the main paper, which were conducted under customers and companies setting in the main paper and include five task volumes: $200$, $400$, $600$, $800$, and $1000$. 

\section{Additional Experimental Results}

\subsection{Additional Results for Scenario 1}

In our main paper, we conducted ten trials for each task count to verify the stochastic generation of users and companies. \textbf{Fig. \ref{STD}} displays the standard deviation for ten datasets at various levels of similarity within Scenario 1, highlighting the variability from trial to trial.The x-axis measures the similarity between teams—derived from Algorithm 2—ranging from $0.0$ to $0.45$, while the y-axis represents the average revenue, scaled to tens of millions of dollars. Different lines indicate the outcomes for $400$ tasks (blue) and $1000$ tasks (red), with surrounding shaded regions illustrating the standard deviation across the ten identical parameter iterations. 

\begin{figure}[ht]
    \centering
    \includegraphics[scale=0.23]{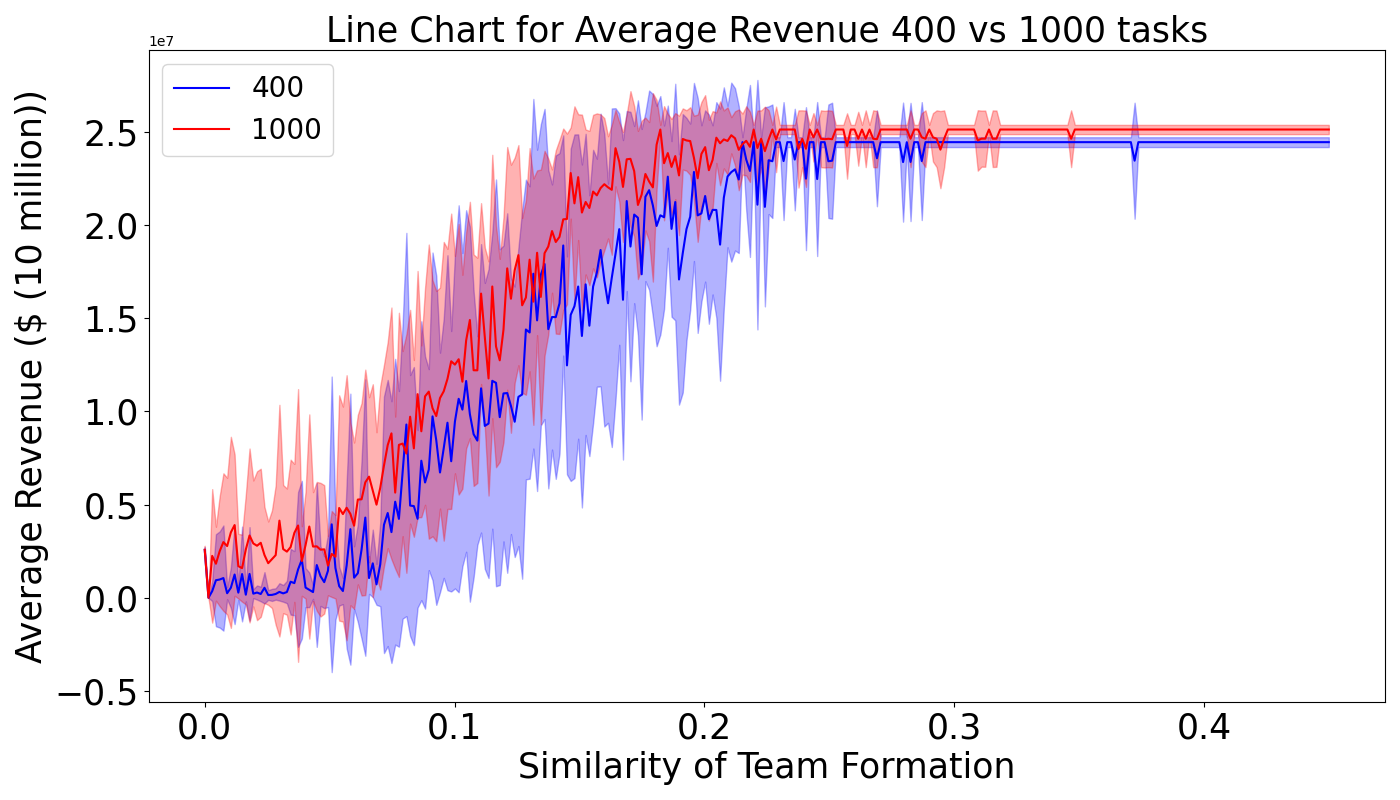}
    \caption{Standard derivation with $10$ random tests in $400$ and $1000$ tasks (Scenario 1)}\label{STD}
\end{figure}

Notably, average revenue increases with greater similarity for both task counts, yet the returns from $1000$ tasks consistently exceed those from $400$ tasks across the entire range of similarity. Both revenue lines plateau at higher similarity levels, suggesting a saturation point where further increases in similarity no longer enhance average revenues. Additionally, the initially broad standard deviations narrow as similarity increases, as shown by the contracting shaded areas and decreasing variance, indicating a stabilisation of average revenues across both task quantities.

\begin{figure}[ht]
    \centering
    \includegraphics[scale=0.23]{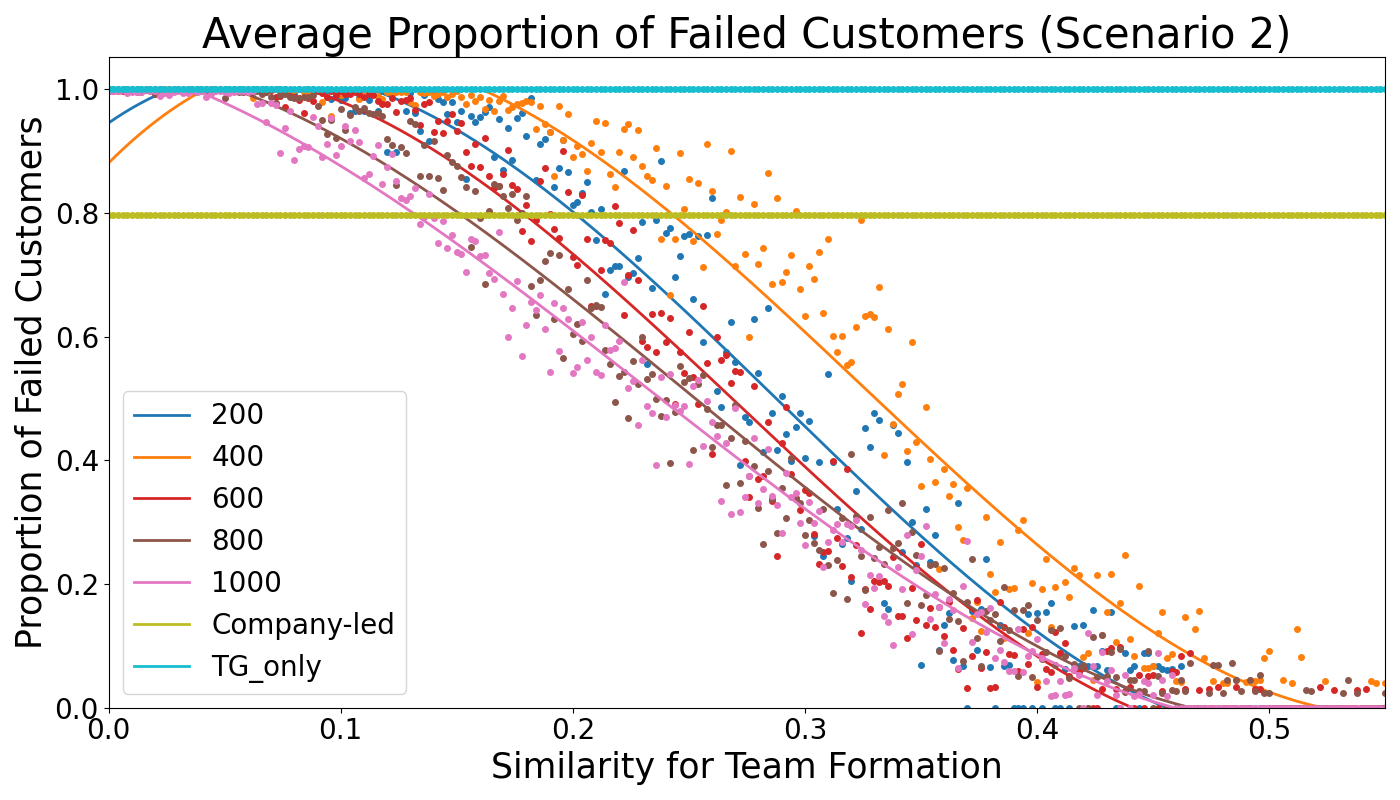}
    \caption{Average proportion of failed customers (Scenario 2)}\label{failed_customers_s2}
\end{figure}

\begin{table}[ht]
    \footnotesize
    \centering
    \setlength{\tabcolsep}{0.5pt}
    \caption{Average customers' payment and companies' revenue at Nash equilibrium in 200-1000 tasks for scenario 1}
    \label{NE_table}
    \begin{tabular}{llllll}
        \hline
        \textbf{Tasks}&\textbf{Similarity} & \textbf{Utilities} (\$) & \textbf{Company-led} & \textbf{Customer-led} & \textbf{PRF}\\
        \hline
        200 & 0.364 & Payment & 172,713 & 141,500 &  -18\%    \\[2pt]
            &       & Revenue  & 2,612,353 & 23,583,488& +900\%  \\[2pt]
        400 & 0.334 & Payment & 173,031 & 146,742  & -15\%    \\[2pt]
            &       & Revenue & 2,565,647 & 24,457,112 & +953\%  \\[2pt]
        600 & 0.322 & Payment & 170,786 & 147,020  & -14\%    \\[2pt]
            &       & Revenue & 2,566,651 & 24,503,497 &  +954\% \\[2pt]
        800 & 0.312 & Payment & 172,753 & 149,964  & -13\%    \\[2pt]
            &       & Revenue & 2,521,374 & 24,994,130 & +991\%    \\[2pt]
        1000& 0.304 & Payment & 172,713 & 150,816    &  -12.5\%  \\[2pt]
            &       & Revenue & 2,578,431 & 25,136,018  &  +974\% \\
        \hline
    \end{tabular}
\end{table}

\textbf{Table~\ref{NE_table}} displays the NE states of companies across various clustering scenarios. As the number of tasks increases, a lower similarity threshold is needed to reach equilibrium due to tasks spanning fewer locations, simplifying fulfilment. From a payment perspective, fewer tasks lead to reduced average customer payments by up to $18\%$ compared to "Company-led", as fewer tasks boost demand for the same services within those tasks, enabling higher discounts. Consequently, this payment reduction decreases overall market payments and the company revenue. In the "Customer-led", company revenue surges by up to $991\%$ because previously unmanageable, high-value tasks become feasible, thus completing more lucrative tasks and substantially increasing revenue.

\subsection{Results for Scenario 2}

In Scenario 2, we reduced the number of companies from 30 to 15, while other parameters remained unchanged compared to Scenario 1. \textbf{Fig. \ref{failed_customers_s2}} shows the percentage of customers with uncompleted tasks. The yellow line (labelled ``Customer-led") shows that approximately $80\%$ of customers failed to complete their tasks. Coloured data points represent the distribution of customer task completions, categorised by different task numbers and increasing company similarity. It is observed that the incomplete rates for all tasks fall below the yellow line when similarity exceeds 0.25 and approach a $100\%$ success rate as similarity increases. However, at the same similarity level, a lower number of tasks correlates with a higher percentage of uncompleted tasks. For example, at a similarity of 0.25, about $60\%$ of customers in 200 tasks remain uncompleted, compared to $50\%$ in 1000 tasks.

\begin{figure}[ht]
    \centering
    \includegraphics[scale=0.23]{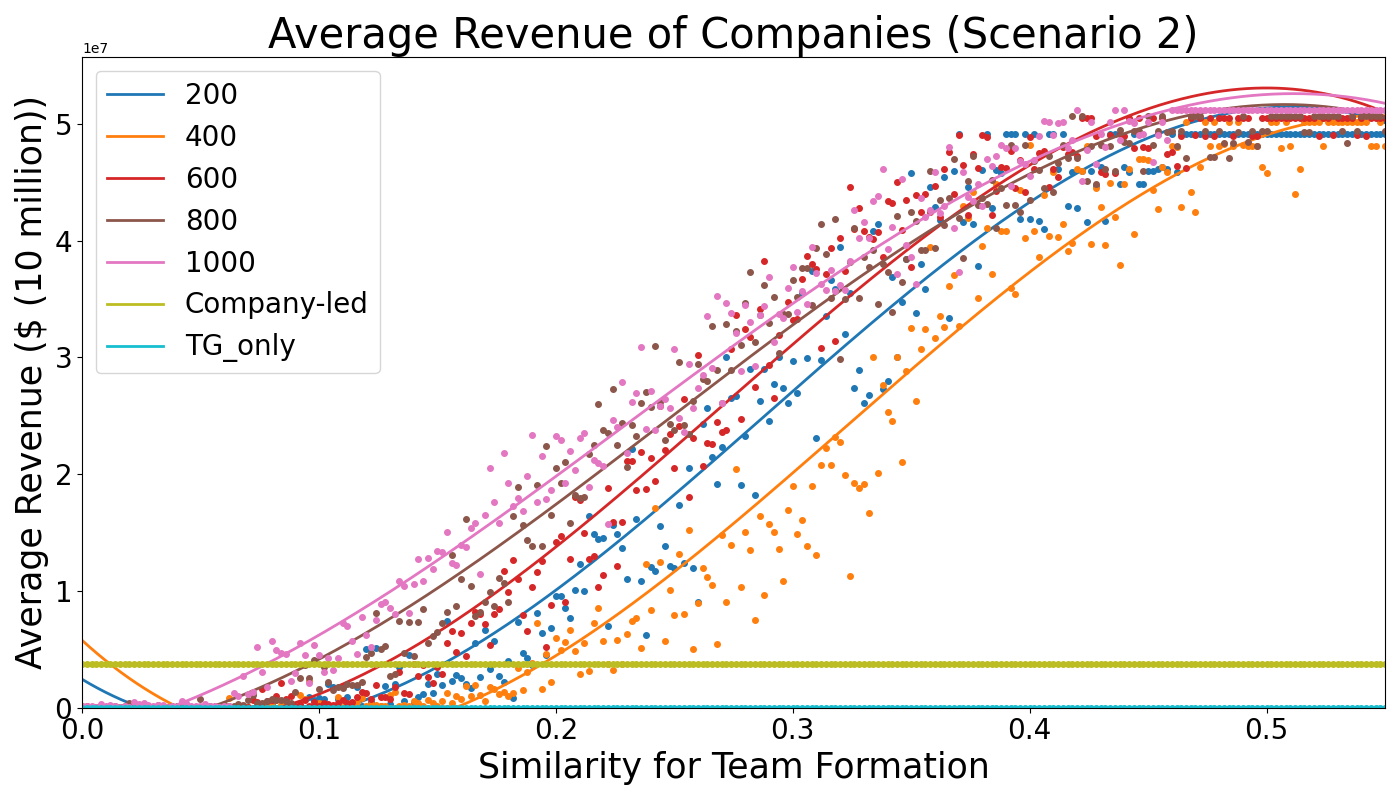}
    \caption{Average revenue of companies (Scenario 2)}\label{profit_s2}
\end{figure}

\textbf{Fig.~\ref{profit_s2}} displays polynomial fitting functions for companies' average revenue as a function of task volumes, with each scenario marked by differently coloured data points. The yellow line, labelled ``Customer-led", serves as a standard reference, while the light blue line represents the scenario where customers generate tasks without company team formation. The figure demonstrates that higher similarity for team formation significantly increases average revenues for companies. Revenues exceed the reference line once similarity surpasses 0.2. Moreover, at a fixed similarity, more tasks yield higher average revenues; for instance, at 0.25, revenues from 1000 tasks are about 1.5 times those from 400 tasks. Notably, the rate at which completion rates converge towards higher levels is noticeably slower for 400 tasks compared to 200 tasks in Scenario 2.

\textbf{Table~\ref{NE_s2}} presents the experimental results for the NE state across different tasks in Scenario 2, assessing financial performance through payments and revenues. The similarity index for each task volume ranges from 0.46 to 0.502, indicating the degree of similarity within each set of tasks. For example, with 200 tasks, payments decreased by $15.4\%$ in the test scenario, while revenues surged by $1254\%$. This trend of reduced payments and markedly increased revenues is consistent across all task volumes, suggesting that the test scenario (``Customer-led") is financially more beneficial than the baseline (``Company-led"). These substantial increases in revenue, often exceeding $1000\%$, highlight a dramatic improvement in efficiency or cost-effectiveness. The reduction in the number of companies significantly increases the difficulty of team formation. The similarity required to reach equilibrium is almost consistent for the $400$, $600$, and $800$ tasks, indicating that a change in the number of tasks does not affect the equilibrium due to the scarcity of companies. However, for both $200$ and $1000$ tasks, the similarity needed to achieve equilibrium is the same, suggesting that having too many or too few tasks makes it relatively easier to reach equilibrium.

\begin{table}[ht]
    \centering\footnotesize
    \setlength{\tabcolsep}{0.5pt}
    \begin{tabular}{llllll}
        \hline
        \textbf{Tasks} & \textbf{Similarity} & \textbf{Utilities} (\$) & \textbf{Company-led} & \textbf{Customer-led} & \textbf{PRF}\\
        \hline
        200 & 0.462 & Payment & 172,681  & 147,413    &  -15.4\% \\[2pt]
            &       & Revenue  & 3,918,280 & 49,137,997 & +1254\%  \\[2pt]
        400 & 0.502 & Payment & 172,684 & 150,412  & -12.9\%     \\[2pt]
            &       & Revenue  & 3,883,929 & 50,137,510 & +1290\%  \\[2pt]
        600 & 0.498 & Payment & 172,491 & 151,503  & -12.2\%    \\[2pt]
            &       & Revenue & 3,542,108 & 50,501,053 &  +1425\% \\[2pt]
        800 & 0.502 & Payment & 172,185 & 151,941  & -11.7\%    \\[2pt]
            &       & Revenue & 3,788,766 & 50,647,273 & +1336\%    \\[2pt]
        1000& 0.46 & Payment & 173,427 & 153,597    &  -11.4\%  \\[2pt]
            &       & Revenue & 3,580,109 & 51,199,116  &  +1430\% \\[2pt]
        \hline
    \end{tabular}
    \caption{Average customers' payment and companies' revenue at Nash equilibrium in 200-1000 tasks for scenario 2}
    \label{NE_s2}
\end{table}

\subsection{Results for Scenario 3}

In Scenario 3, we reduced the number of companies from 30 to 15 and increased the range of services offered by companies from $[10, 30]$ to $[30, 50]$, while other parameters remain unchanged compared to Scenario 1.

\begin{figure}[ht]
    \centering
    \includegraphics[scale=0.23]{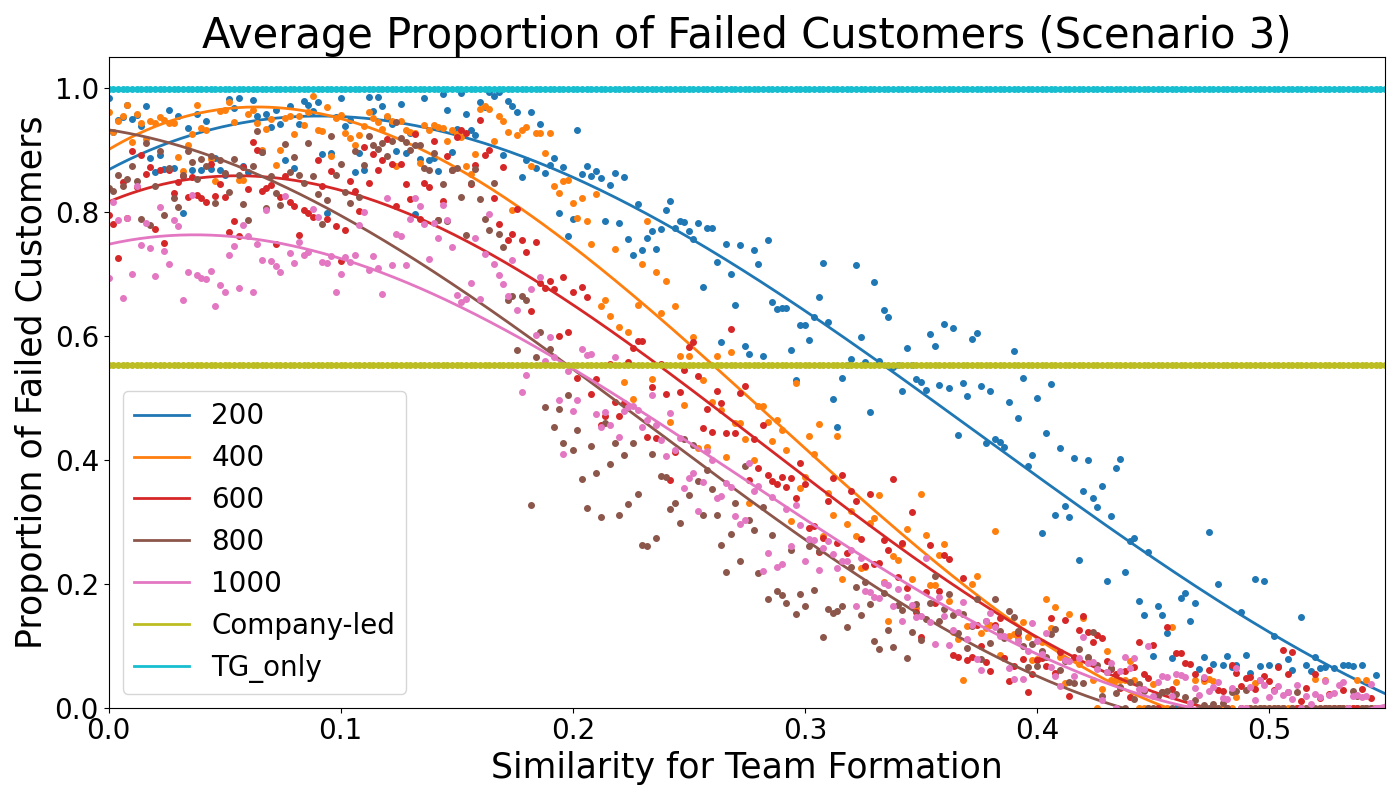}
    \caption{Average proportion of failed customers (Scenario 3)}\label{failed_customers_s3}
\end{figure}

\textbf{Fig.~\ref{failed_customers_s3}} shows the percentage of customers with uncompleted tasks. The yellow line (labelled ``Company-led") shows that approximately $55\%$ of customers failed to complete their tasks, which is significantly reduced compared with Scenario 1 and 2. It is observed that the incomplete rates for all tasks fall below the yellow line when similarity exceeds 0.33 and approach a $100\%$ success rate as similarity increases. However, at the same similarity level, a lower number of tasks correlates with a higher percentage of uncompleted tasks. For example, at a similarity of 0.3, about $70\%$ of customers in 200 tasks remain uncompleted, compared to $30\%$ in 1000 tasks.

\begin{figure}[ht]
    \centering 
    \includegraphics[scale=0.23]{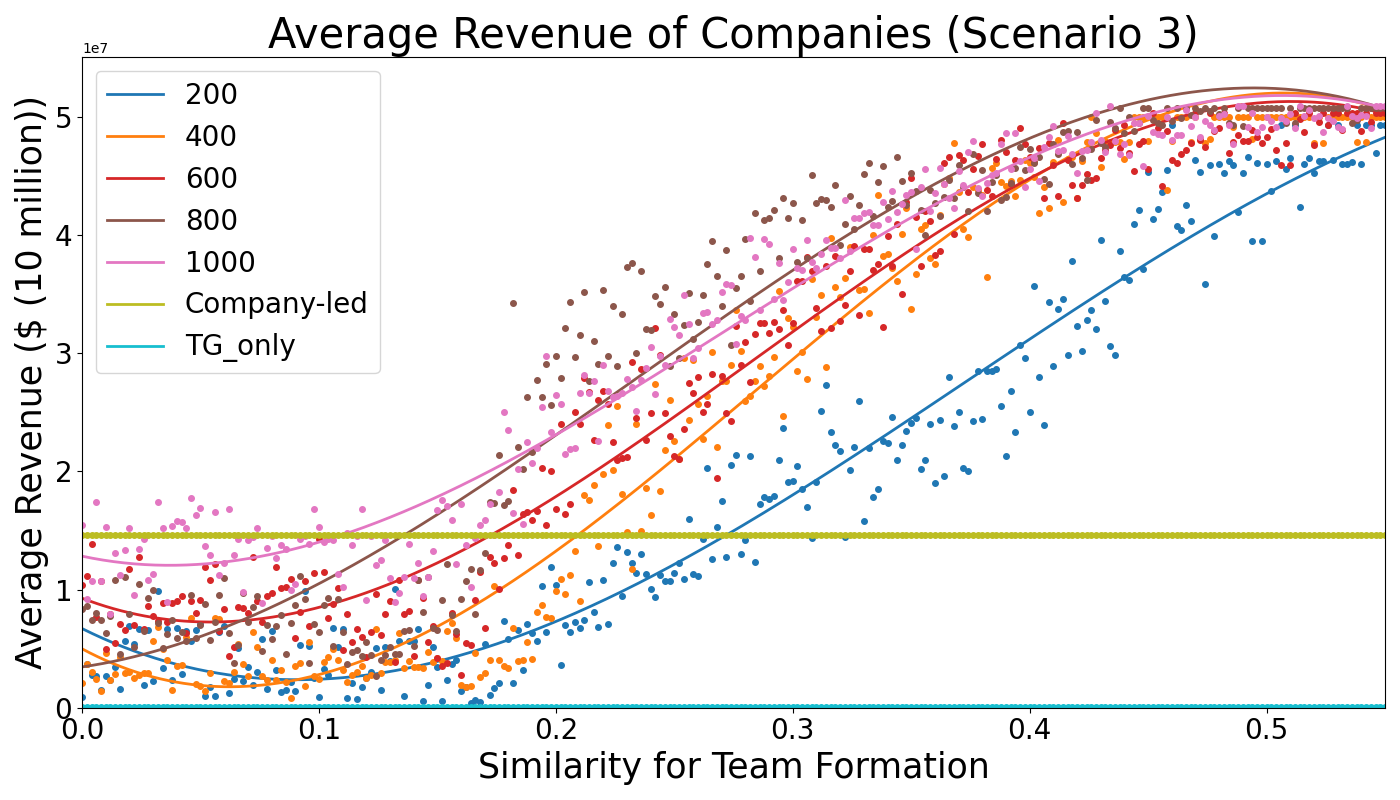}
    \caption{Average revenue of companies (Scenario 3)}\label{profit_s3}
\end{figure}

\textbf{Fig.~\ref{profit_s3}} displays polynomial fitting functions for companies' average revenue as a function of task volumes, with each scenario marked by differently coloured data points. The yellow line, labelled ``Company-led", serves as a standard reference, while the light blue line represents the scenario where customers generate tasks without company team formation. The figure demonstrates that higher similarity for team formation significantly increases average revenues for companies. Revenues exceed the reference line once similarity surpasses 0.25. Moreover, at a fixed similarity, more tasks yield higher average revenues; for instance, at 0.3, revenues from 1000 (around $3.8*10^7$) tasks are about $2$ times those from 400 tasks (around $1.8*10^7$). It is worth noting that in Scenario $3$, the trend observed in the task experiments is similar to Scenario $1$: as the number of tasks decreases, the difficulty of forming teams increases. The similarity required to reach equilibrium for $200$ tasks is significantly higher than in other tests. Slightly different from Scenario $1$, when exceeding the reference, 800 tasks reach equilibrium earlier relative to $1000$ tasks.

\begin{table}[ht]
    \centering\footnotesize
    \setlength{\tabcolsep}{0.5pt}
    \begin{tabular}{llllll}
        \hline
        \textbf{Tasks} & \textbf{Similarity} & \textbf{Utilities} (\$) & \textbf{Company-led} & \textbf{Customer-led} & \textbf{PRF}\\
        \hline
        200 & 0.538 & Payment & 173,392 & 148,033 &  -15\%    \\[2pt]
            &       & Revenue  & 14,435,611 & 49,344,631& +341\%  \\[2pt]
        400 & 0.496 & Payment & 172,128 & 149,990  & -13\%    \\[2pt]
            &       & Revenue & 15,044,173 & 49,996,915 & +332\%  \\[2pt]
        600 & 0.47 & Payment & 171,852 & 151,057  & -12.2\%    \\[2pt]
            &       & Revenue & 14,478,544 & 50,352,640 &  +347\% \\[2pt]
        800 & 0.448 & Payment & 172,431 & 152,215  & -11.8\%    \\[2pt]
            &       & Revenue & 14,137,554 & 50,738,408 & +358\%    \\[2pt]
        1000& 0.48 & Payment & 172,566 & 152,874    &  -11.5\%  \\[2pt]
            &       & Revenue & 15,011,309 & 50,958,118  &  +339\% \\[2pt]
        \hline
    \end{tabular}
    \caption{Average customers' payment and companies' revenue at Nash equilibrium in 200-1000 tasks for scenario 3}
    \label{NE_s3}
\end{table}

\textbf{Table~\ref{NE_s3}} lists the experimental results of the Nash Equilibrium (NE) state for different tasks in Scenario 2, evaluating financial performance through payments and revenues. The similarity index for each task volume ranges from 0.48 to 0.538, indicating the level of similarity within each set of tasks. For example, in the test scenario, payments for 200 tasks decreased by $15\%$, while revenues surged by $341\%$. This trend of decreasing payments and significantly increasing revenues is consistent across all task volumes, suggesting that the test scenario (``Customer-led") is more financially beneficial than the baseline scenario (``Company-led"). Unlike Scenarios 1 and 2, when the number of tasks changes, the revenues do not increase significantly but fluctuate between $330\%$ and $350\%$. We can claim that the entire market is in a Stackelberg Equilibrium state at $200$ tasks with a team formation similarity of 0.36.

\subsection{Results for Scenario 4}

In Scenario 4, we reduced each consumer's needs from $[1, 10]$ to $[1, 5]$, while other parameters remained unchanged compared to Scenario 1. This change shifts each consumer's average task from complex (average $5$ cities) to simple (average $2.5$ cities) and the average payment amount from high (around $\$170,000$) to low (around $\$94,000$). 

\begin{figure}[ht]
    \centering
    \includegraphics[scale=0.23]{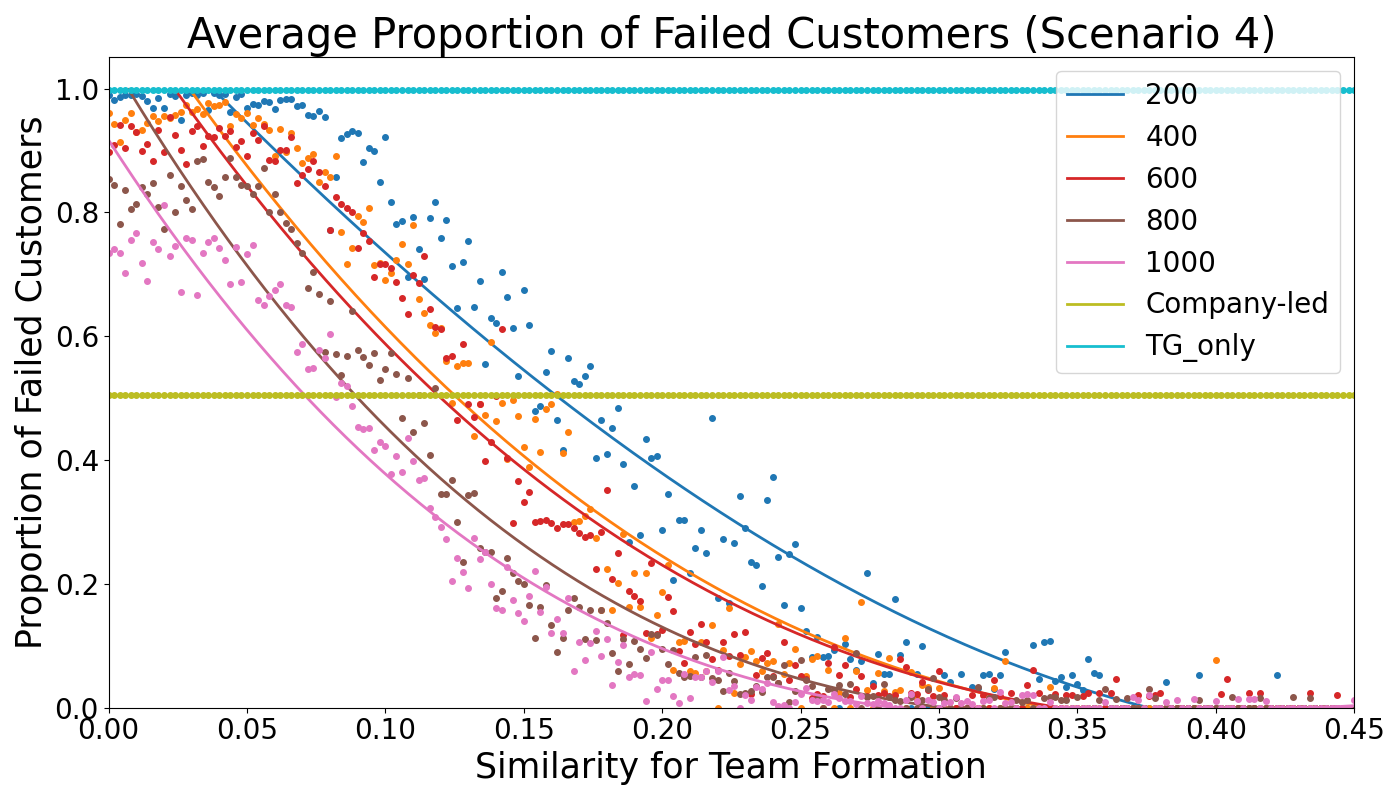}
    \caption{Average proportion of failed customers (Scenario 4)}\label{failed_customers_s4}
\end{figure}

\textbf{Fig.~\ref{failed_customers_s4}} shows the percentage of customers with uncompleted tasks. The yellow line (labelled ``Customer-led") shows that approximately $50\%$ of customers failed to complete their tasks, which is significantly reduced compared with Scenario 1 and 2. It is observed that the incomplete rates for all tasks fall below the yellow line when similarity exceeds $0.18$ and approach a $100\%$ success rate as similarity increases. However, at the same similarity level, a lower number of tasks correlates with a higher percentage of uncompleted tasks. For example, at a similarity of 0.2, about $40\%$ of customers in 200 tasks remain uncompleted, compared to $10\%$ in 1000 tasks.

\begin{figure}[ht]
    \centering
    \includegraphics[scale=0.23]{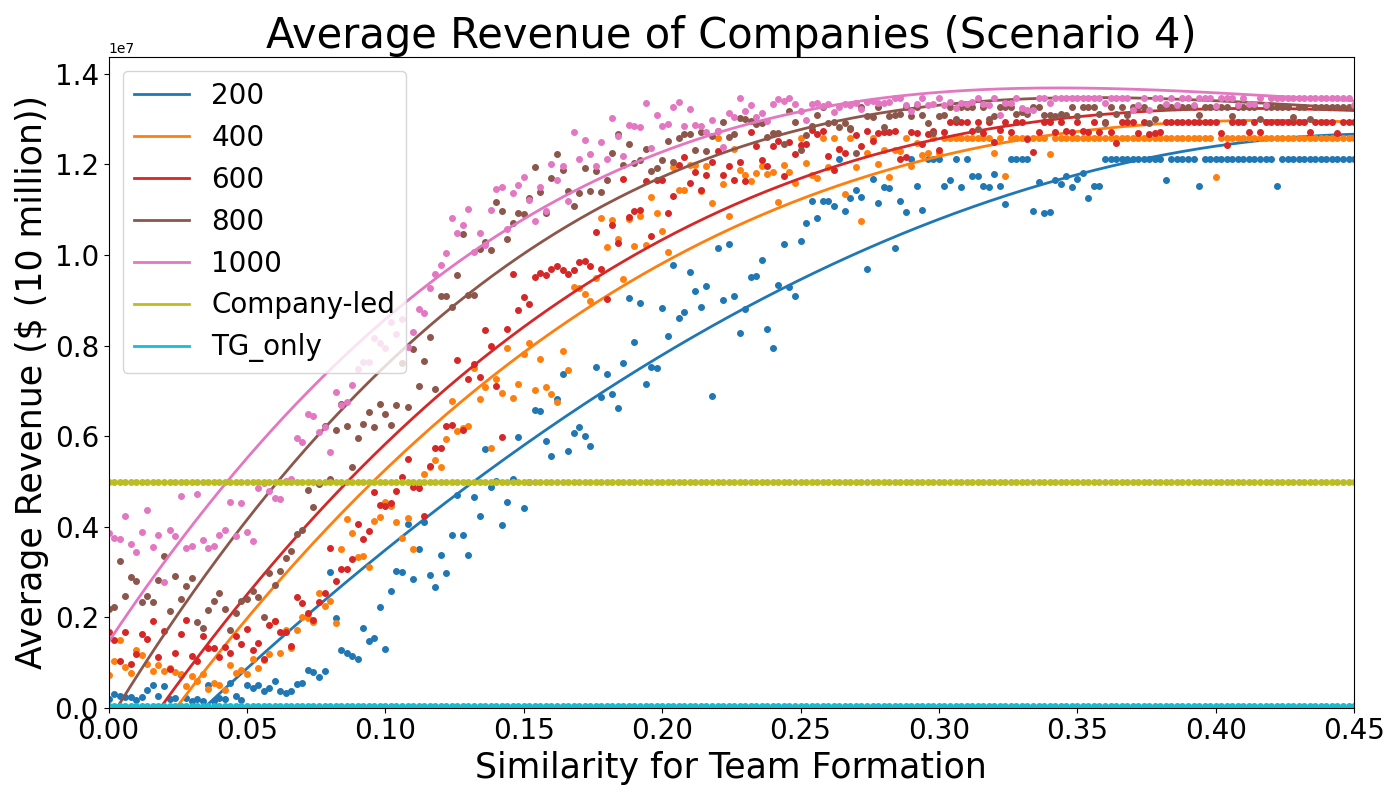}
    \caption{Average revenue of companies (Scenario 4)}\label{profit_s4}
\end{figure}

\textbf{Fig.~\ref{profit_s4}} displays polynomial fitting functions for the average revenue of companies as a function of task volumes, with each scenario marked by differently coloured data points. The yellow line, labelled ``Company-led," serves as a standard reference, while the light blue line represents the scenario where customers generate tasks without company team formation. The figure demonstrates that higher similarity for team formation significantly increases average revenues for companies. Revenues exceed the reference line once similarity surpasses 0.15. Moreover, at a fixed similarity, more tasks yield higher average revenues; for instance, at 0.2, revenues from 1000 (around $1.2*10^7$) tasks are about $2$ times those from 400 tasks (around $0.6*10^7$). 

\begin{table}[ht]
    \centering\footnotesize
    \setlength{\tabcolsep}{0.5pt}
    \begin{tabular}{llllll}
        \hline
        \textbf{Tasks} & \textbf{Similarity} & \textbf{Utilities} (\$) & \textbf{Company-led} & \textbf{Customer-led} & \textbf{PRF}\\
        \hline
        200 & 0.36 & Payment & 94,196 & 72,760 &  -21\%    \\[2pt]
            &       & Revenue  & 5,060,874 & 12,126,813& +239\%  \\[2pt]
        400 & 0.326 & Payment & 93,922 & 75,550  & -20.5\%    \\[2pt]
            &       & Revenue & 5,272,500 & 12,591,740 & +239\%  \\[2pt]
        600 & 0.304 & Payment & 93,843 & 77,673  & -17.3\%    \\[2pt]
            &       & Revenue & 4,917,013 & 12,945,587 &  +263\% \\[2pt]
        800 & 0.288 & Payment & 94,229 & 79,608  & -15.6\%    \\[2pt]
            &       & Revenue & 4,668,404 & 13,268,062 & +284\%    \\[2pt]
        1000& 0.274 & Payment & 94,422 & 80,815 &  -14.5\%  \\[2pt]
            &       & Revenue & 5,038,154 & 13,469,243  &  +267\% \\[2pt]
        \hline
    \end{tabular}
    \caption{Average customers' payment and companies' revenue at Nash equilibrium in 200-1000 tasks for scenario 4}
    \label{NE_s4}
\end{table}
\textbf{Table~\ref{NE_s4}} lists the experimental results of the NE state for different tasks in Scenario 4, evaluating financial performance through payments and revenues. The similarity index for each task volume ranges from 0.274 to 0.36, indicating the level of similarity within each set of tasks. For example, in the test scenario, payments for 200 tasks decreased by $21\%$, while revenues surged by $239\%$. This trend of decreasing payments and significantly increasing revenues is consistent across all task volumes, suggesting that the test scenario (``Customer-led") is more financially beneficial than the baseline scenario (``Company-led"). 

\subsection{Results for Scenario 5}

In Scenario $5$, we increased the number of consumers from 5,000 to 10,000, while other parameters remain unchanged compared to Scenario 1. This change has increased the total payment amount in the market, potentially increasing the revenues for companies. 

\begin{figure}[ht]
    \centering
    \includegraphics[scale=0.23]{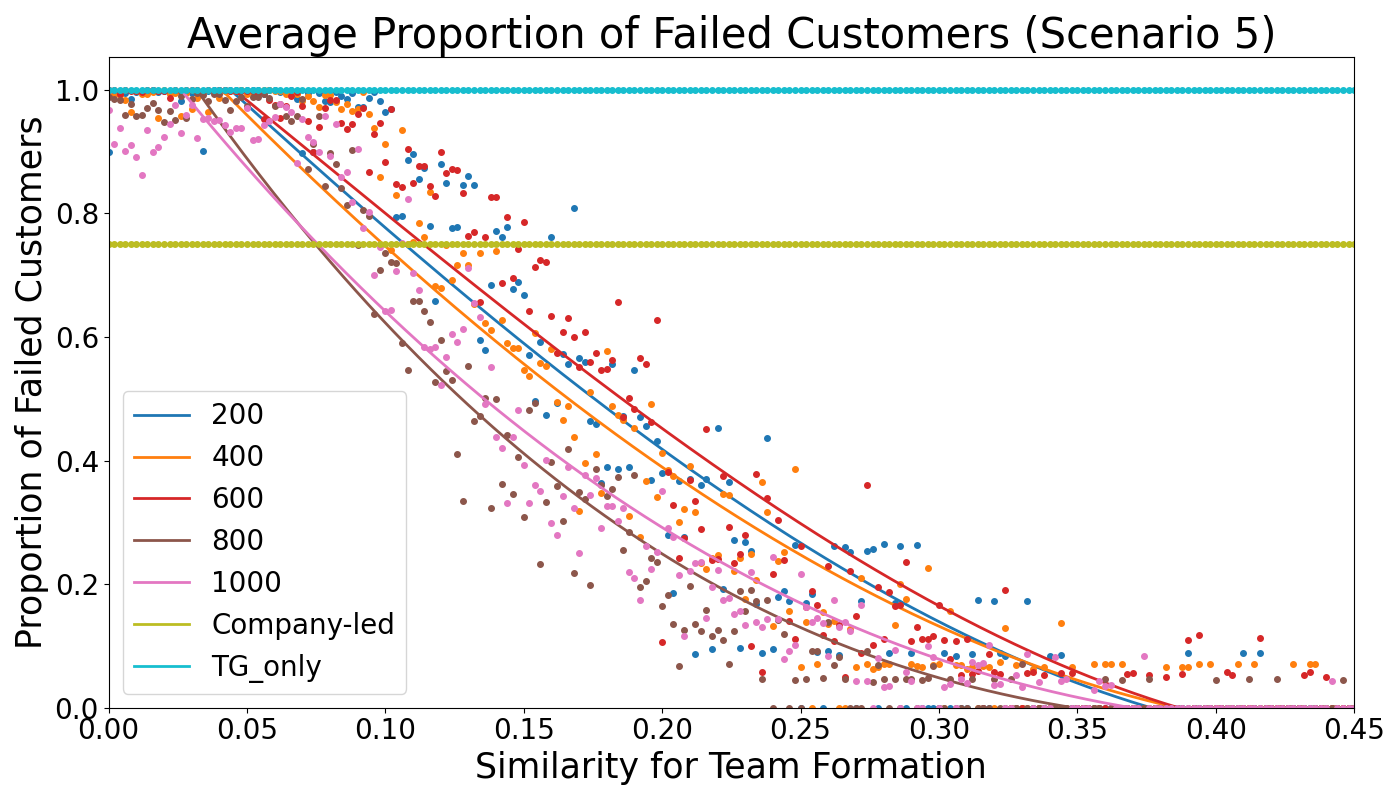}
    \caption{Average proportion of failed customers (Scenario 5)}\label{failed_customers_s5}
\end{figure}

\textbf{Fig.~\ref{failed_customers_s5}} shows the percentage of customers with uncompleted tasks. The yellow line (labelled ``Company-led") shows that approximately $75\%$ of customers failed to complete their tasks, which is significantly reduced compared with Scenario 1. It is observed that the incomplete rates for all tasks fall below the yellow line when similarity exceeds 0.08 and approach a $100\%$ success rate as similarity increases. However, at the same similarity level, a lower number of tasks correlates with a higher percentage of uncompleted tasks. For example, at a similarity of 0.2, about $50\%$ of customers in $200$ tasks remain uncompleted, compared to $30\%$ in $1000$ tasks.

\begin{figure}[ht]
    \centering
    \includegraphics[scale=0.23]{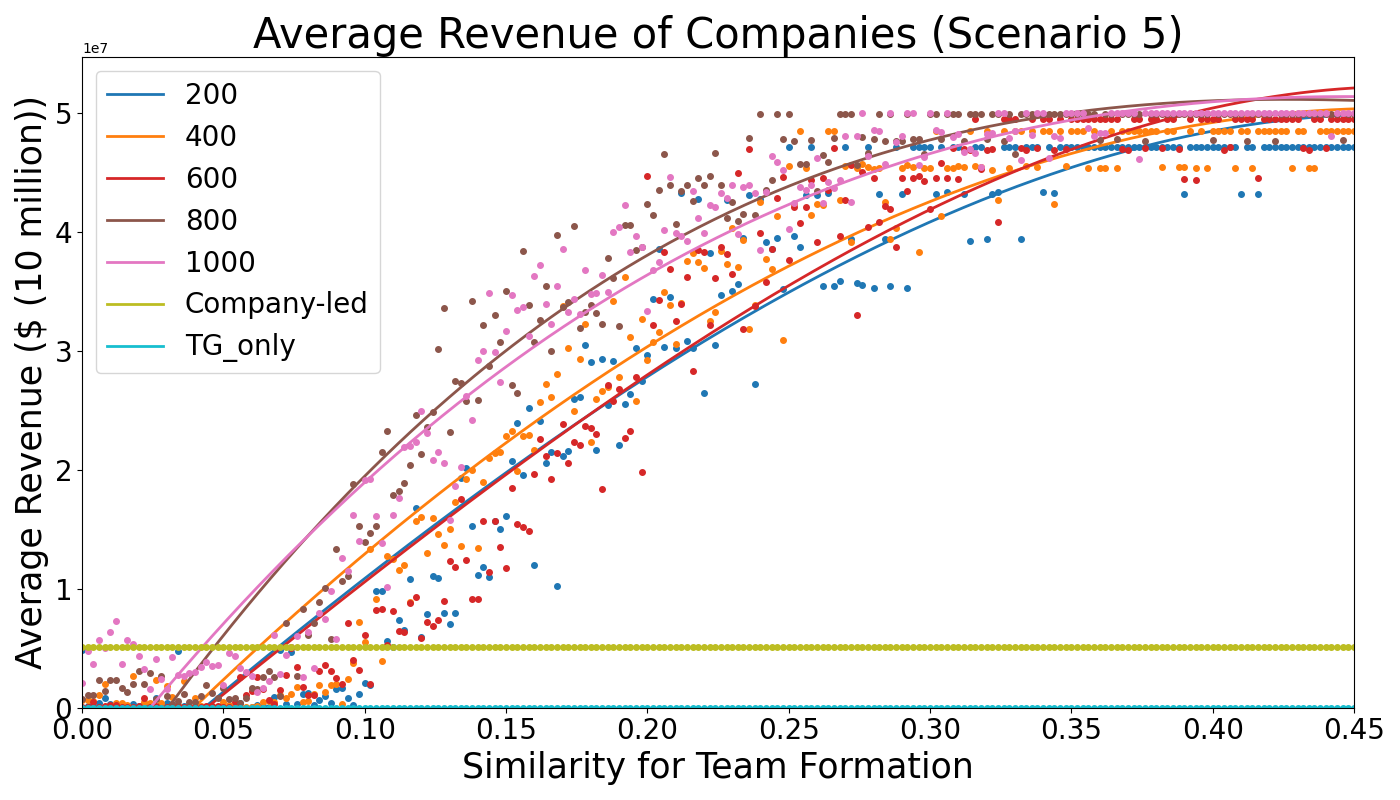}
    \caption{Average revenue of companies (Scenario 5)}\label{profit_s5}
\end{figure}

\textbf{Fig.~\ref{profit_s5}} displays polynomial fitting functions for companies' average revenue as a function of task volumes, with each scenario marked by differently coloured data points. The yellow line, labelled ``Company-led" serves as a standard reference, while the light blue line represents the scenario where customers generate tasks without company team formation. The figure demonstrates that higher similarity for team formation significantly increases average revenues for companies. Revenues exceed the reference line once similarity surpasses $0.08$. Moreover, at a fixed similarity, more tasks yield higher average revenues; for instance, at $0.2$, revenues from 1000 (around $4*10^7$) tasks are about $1.5$ times those from 400 tasks (around $2.7*10^7$). 

\textbf{Table~\ref{NE_s5}} displays the NE states of companies across various clustering scenarios. As the number of tasks increases, a lower similarity threshold is needed to reach equilibrium due to tasks spanning fewer locations, simplifying fulfilment. From a payment perspective, fewer tasks lead to reduced average customer payments by up to $19.2\%$ compared to ``Company-led", as fewer tasks boost demand for the same services within those tasks, enabling higher discounts. Consequently, this reduction in payments decreases overall market payments and company revenues. In the ``Customer-led", company revenues surge by up to $1005\%$ because previously unmanageable, high-value tasks become feasible, thus completing more lucrative tasks and substantially increasing revenues. It is worth noting that the results of Scenario 5 are similar to Scenario 1, where an increase in the number of users only causes an increase in the amount paid by the market, with the rest of the relevant parameters remaining unchanged.

\begin{table}[ht]
    \centering\footnotesize
    \setlength{\tabcolsep}{0.5pt}
    \begin{tabular}{llllll}
        \hline
        \textbf{Tasks} & \textbf{Similarity} & \textbf{Utilities} (\$) & \textbf{Company-led} & \textbf{Customer-led} & \textbf{PRF}\\
        \hline
        200 & 0.344 & Payment & 172,747 & 141,438 &  -19.2\%    \\[2pt]
            &       & Revenue  & 5,465,274 & 47,146,183 & +862\%  \\[2pt]
        400 & 0.33 & Payment & 171,981 & 145,416  & -15.4\%    \\[2pt]
            &       & Revenue & 5,031,841 & 48,472,799 & +963\%  \\[2pt]
        600 & 0.326 & Payment & 172,903 & 149,331  & -13.7\%    \\[2pt]
            &       & Revenue & 4,998,323 & 49,443,699 &  +989\% \\[2pt]
        800 & 0.314 & Payment & 172,852 & 149,559  & -13.5\%    \\[2pt]
            &       & Revenue & 4,964,407 & 49,853,104 & +1005\%    \\[2pt]
        1000& 0.3 & Payment & 172,293 & 149,938    &  -13\%  \\[2pt]
            &       & Revenue & 5,104,622 & 49,979,608  &  +979\% \\[2pt]
        \hline
    \end{tabular}
    \caption{Average customers' payment and companies' revenue at Nash equilibrium in 200-1000 tasks for scenario 5}
    \label{NE_s5}
\end{table}

\end{document}